\numberwithin{equation}{section}
\definecolor{darkgreen}{RGB}{0,153,0}
\newtheorem{theorem}{Theorem}[section]
\newtheorem{lemma}[theorem]{Lemma}
\newtheorem{proposition}[theorem]{Proposition}
\newtheorem{corollary}[theorem]{Corollary}
\theoremstyle{definition}
\newtheorem{definition}[theorem]{Definition}
\newtheorem{remark}[theorem]{Remark}
\newcommand\numberthis{\addtocounter{equation}{1}\tag{\theequation}}
\newcommand{\cH}{\mathcal{H}}
\newcommand{\cD}{\mathcal{D}}
\newcommand{\cS}{\mathcal{S}}
\newcommand{\cP}{\mathcal{P}}
\newcommand{\cT}{\mathcal{T}}
\newcommand{\cL}{\mathcal{L}}
\newcommand{\cB}{\mathcal{B}}
\newcommand{\cO}{\mathcal{O}}
\newcommand{\cU}{\mathcal{U}}
\newcommand{\cC}{\mathcal{C}}
\newcommand{\cE}{\mathcal{E}}
\newcommand{\cX}{\mathcal{X}}
\newcommand{\kE}{\mathfrak{E}}
\newcommand{\kD}{\mathfrak{D}}
\newcommand{\cV}{\mathcal{V}}
\newcommand{\bF}{\bar{F}}
\newcommand{\one}{\mathds{1}}
\newcommand{\eps}{\varepsilon}
\newcommand{\X}{\rangle\langle}
\DeclareMathOperator{\tr}{Tr}
\DeclareMathOperator{\id}{id}
\newcommand{\sumi}{\sum\nolimits}
\DeclareMathOperator{\supp}{supp}
\newcommand{\invP}[1]{\Phi^{-1}\left(#1\right)}
\newcommand{\nin}{n\in\mathbb{N}}
\newcommand{\nlim}{{n\rightarrow\infty}}
\newcommand{\kEm}{\kE_{\text{\normalfont mix}}}
\newcommand{\ul}[1]{\underline{#1}}
\newcommand{\ox}{\otimes}
\DeclareMathOperator{\spn}{span}
\newcommand{\dl}{d\mu(\lambda)}
\begin{document}
\title{\textbf{Second order asymptotics of \\visible mixed quantum source coding\\ via universal codes}}
\author{Felix Leditzky and Nilanjana Datta\\[0.25cm]
\textit{\small Statistical Laboratory, Centre for Mathematical Sciences, University of Cambridge,}\\
\textit{\small Wilberforce Road, Cambridge CB3 0WA, United Kingdom} }
\maketitle
\begin{abstract}
The simplest example of a quantum information source with memory is a mixed source which emits signals entirely from one of two memoryless quantum sources with given a priori probabilities. 
Considering a mixed source consisting of a general one-parameter family of memoryless sources, we derive the second order asymptotic rate for fixed-length visible source coding.
Furthermore, we specialize our main result to a mixed source consisting of two memoryless sources.
Our results provide the first example of second order asymptotics for a quantum information-processing task employing a resource with memory.
For the case of a classical mixed source (using a finite alphabet), our results reduce to those obtained by Nomura and Han \cite{NH13}.
To prove the achievability part of our main result, we introduce universal quantum source codes achieving second order asymptotic rates. 
These are obtained by an extension of Hayashi's construction \cite{Hay08} of their classical counterparts.
\end{abstract}

\section{Introduction}
Source coding (or data compression) is essential for efficient storage and transmission of information. 
Hence, evaluating the optimal rate of data compression is a fundamental problem in information theory. 
In classical information theory, the simplest class of sources is composed of so-called i.i.d.~or stationary, memoryless sources, the name `memoryless' arising from the fact that there is no correlation between successive signals emitted by such a source. 
Although these sources play a prominent role in information theory, in real-world applications the assumption of sources being memoryless is not necessarily justified. 
This is why it is important to study data compression for sources with memory. 
The simplest example of such a source is a {\em{mixed source}}.
It can be constructed from two i.i.d.~sources as follows. 
One associates a priori probabilities, say $t$ and $(1-t)$, to the two sources respectively. 
Then the mixed source is one for which all successive signals are emitted from the first source with probability $t$, or from
the second source with probability $(1- t)$. 
The memory of the mixed source can be trivially seen to be governed by a two-state Markov chain which is aperiodic but not irreducible, and hence such a source is non-ergodic (see e.g.~\cite{Nor98}).

Optimal rates of reliable data compression for the above sources and their quantum analogues were originally evaluated under the requirement that the error incurred in the compression and decompression scheme vanishes in the asymptotic limit (i.e.~the limit $n\rightarrow \infty$ where $n$ denotes the number of uses of the source). 
The optimal asymptotic rate for a classical i.i.d.~source is given by its Shannon entropy \cite{Sha48}, whereas the corresponding rate for a quantum memoryless source is given by its von Neumann entropy \cite{Sch95}. 
The optimal (first order) asymptotic rate for mixed source coding was derived by Han \cite{Han03} in the classical case, and in \cite{BD06a} in the quantum case, employing the so-called Information Spectrum Approach.\footnote{This approach provides a unifying mathematical framework for obtaining asymptotic rate formulae for various different tasks in information theory, without making any assumptions on the structure or properties of the underlying resources.}
It was shown to be given by the maximum of the Shannon (resp.~von Neumann) entropies of the two underlying classical (resp.~quantum) memoryless sources.

Recently, a more refined asymptotic analysis of data compression for memoryless sources under the (more reasonable) requirement of a non-zero error threshold $\eps \in (0,1)$ was done (\cite{DL14b}, see also \cite{TH13}). 
The quantity analysed was the {\em{minimum compression length}}, which we denote by $\log M_n\equiv \log_2 M_n$. 
In the classical case this is the minimum number of bits needed to compress signals emitted by $n$ uses of the source so that they can be recovered with an error of at most $\eps$ upon decompression. 
In the quantum case, it is the minimum dimension of the compressed Hilbert space compatible with the given error threshold. 
The second order asymptotic expansions of the minimum compression length for both the classical and quantum cases were proved to be of the form
\begin{align}
\log M_n = a n + b \sqrt{n} + {\mathcal{O}}(\log n).\label{eq:mcl-asymptotic expansion}
\end{align}
Here, the coefficient $a$ of the leading order term constitutes the first order asymptotics of the minimum compression length, and, as expected, is given by the optimal asymptotic rate. 
The coefficient $b$ is a function of both the source and the allowed error threshold $\eps$. 
It constitutes the second order asymptotics and is hence referred to as the {\em{second order asymptotic rate}} (cf.~\Cref{def:source-coding}). It is given by $-\sqrt{V}\Phi^{-1}(\eps)$, where $\Phi^{-1}$ denotes the inverse of the cumulative distribution function of the standard normal distribution (defined in \eqref{eq:invP}), and $V$ denotes the {\em{information variance}} of the source (cf.~\Cref{def:quantum-relative-entropy}(ii)). 
The asymptotic expansion \eqref{eq:mcl-asymptotic expansion} was evaluated for fixed-length source coding in the classical case by Strassen \cite{Str62} (see also Hayashi \cite{Hay08}) and in the quantum case (for the {\em{visible setting}}) in \cite{DL14b}. 

Deriving second order asymptotic rates in Classical Information Theory was initiated by Strassen \cite{Str62}. In Quantum Information Theory, the topic was introduced in 2012 independently by Li \cite{Li14} and Tomamichel and Hayashi \cite{TH13}, who obtained a second order asymptotic characterization of hypothesis testing.
In the latter paper, the authors used this result to characterize the second order asymptotics of randomness extraction and source compression with quantum side information. 
Since then second order asymptotic expansions have been obtained for a range of operational quantities characterizing information-processing tasks. 
These include entanglement conversion \cite{KH13a,DL14b}, classical-quantum channel coding \cite{TT13,BG13,DL14b}, quantum source coding \cite{DL14b}, source coding with quantum side information \cite{TH13,BG13}, noisy dense-coding \cite{DL14b}, achievability bounds on the coding rate for entanglement-assisted communication \cite{DTW14}, an achievability bound on the quantum communication cost in state redistribution \cite{DHO14}, and achievability bounds on the quantum capacity \cite{BDL15,TBR15}. 
Common to all these endeavours is that the underlying resource (such as the source state in source coding, or the channel in classical-quantum channel coding) is assumed to be memoryless.

Obtaining second order asymptotic expansions for any information-processing task employing resources with memory is a more challenging task.
The first foray into this task was made in classical information theory by Polyanskiy, Poor and Verdú \cite{PPV11}, who obtained second order expansions for the capacity of a classical mixed channel (see also \cite{TT14}).
In \cite{NH13}, Nomura and Han evaluated second order optimal rates for fixed-length source coding for a classical mixed source (see also \cite{Hay08}).
Yagi and Nomura \cite{YN14} (see also Yagi, Han, Nomura \cite{YHN15}) derived the second order coding rate, or channel dispersion, of a mixed channel under the assumption that the channel is \emph{well-ordered} (cf.~\cite[Def.~3]{YN14} or \cite[Def.~3]{YHN15}).

All the works mentioned above emphasize the importance of mixed source coding or mixed channel coding as simple yet instructive examples of an information-theoretic task employing non-ergodic resources.
The main focus of our paper is to extend the analysis of such tasks to the quantum regime, by investigating mixed quantum source coding.
We consider fixed-length source coding for a mixed source constructed from a general one-parameter family of memoryless sources, obtaining optimal second order rates in the visible setting.
In the classical case, our results reproduce the optimal rates of Nomura and Han in the finite-alphabet setting. 
The key tool in our derivations is the second order asymptotic expansion of the information spectrum entropy $D_s^\eps(\rho\|\tau)$ (see \eqref{eq:inf-spec-rel-entropy} for a definition), which was derived in \cite{TH13}.
To prove achievability of the second order asymptotic rates, we introduce universal quantum source codes achieving second order asymptotic rates. These universal codes are obtained by extending the original construction of universal quantum source codes by Jozsa \textit{et al.~}\cite{JHHH98} using Hayashi's construction of classical universal source codes which achieve second order asymptotic rates \cite{Hay08}.

The paper is organized as follows. 
After setting the notation and providing the necessary mathematical prerequisites in \Cref{sec:preliminaries}, we discuss the operational setting of mixed source coding in \Cref{sec:operational-setting}:
In \Cref{sec:mixed-sources} we explain in detail how a mixed source consisting of a one-parameter family of memoryless sources is constructed.
\Cref{sec:visible-source-coding} gives a short overview of visible quantum source coding. 
In \Cref{sec:second-order-rates} we define the second order asymptotic rate of a quantum source. 
Our main result is given in Section~\ref{sec:results} and comprises expressions for the second order asymptotic rates of mixed source coding. 
The proofs of these expressions are given in Section~\ref{sec:proofs}. For the achievability proofs, we construct universal source codes achieving second order rates in \Cref{sec:universal}. Finally, in \Cref{sec:conclusion} we present a conclusion and mention open problems.

\section{Mathematical preliminaries}\label{sec:preliminaries}
For a Hilbert space $\cH$, let $\cB(\cH)$ denote the algebra of linear operators acting on $\cH$, and let  $\cP(\cH)$ denote
the set of positive semi-definite operators on $\cH$. 
Further, let $\cD(\cH)\coloneqq\lbrace\rho\in\cP(\cH)\colon \tr\rho=1\rbrace$ denote the set of states (density matrices) on $\cH$. 
For a state $\rho\in\cD(\cH)$, the von Neumann entropy $S(\rho)$ is defined as $S(\rho)\coloneqq - \tr\left(\rho\log\rho\right)$. 
Here and henceforth, all logarithms are taken to base $2$, and all Hilbert spaces are assumed to be finite-dimensional. 
We denote by $\one\in\cP(\cH)$ the identity operator on $\cH$, and by $\id\colon \cB(\cH)\rightarrow\cB(\cH)$ the identity map on operators on $\cH$. 
For a pure state $|\psi\rangle$, the corresponding projector is abbreviated as $\psi\equiv|\psi\X\psi|$. 

A quantum operation $\Lambda\colon \cD(\cH)\rightarrow\cD(\cH')$ is a linear, completely positive, trace-preserving (CPTP) map. 
For self-adjoint operators $A, B \in \cB(\cH)$, let $\lbrace A \geq B\rbrace$ denote the projector onto the subspace spanned by the eigenvectors of the operator $A-B$ corresponding to non-negative eigenvalues, and set $\lbrace A < B\rbrace \coloneqq \one - \lbrace A \geq B \rbrace.$ 
We further define $A_+ \coloneqq \lbrace A\geq 0\rbrace A\lbrace A\geq 0\rbrace$ and take note of the following property:
\begin{lemma}[\cite{ON00}]\label{lem:tr-projector}
For operators $A,B\geq 0$ and $0\leq P \leq\one$ we have
	\begin{align*}
\tr(A - B)_+ =	\tr[\lbrace A\geq B\rbrace(A-B)]\geq\tr [P(A-B)].
	\end{align*}
\end{lemma}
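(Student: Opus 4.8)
The plan is to reduce everything to the single self-adjoint operator $M \coloneqq A - B$ and the spectral projector $\Pi \coloneqq \lbrace A \geq B\rbrace = \lbrace M \geq 0\rbrace$ attached to it. Positivity of $A$ and $B$ plays no real role beyond ensuring that $M$ is self-adjoint, so I would phrase the argument entirely in terms of $M$ and $\Pi$ and the fact that $\Pi$ commutes with $M$, being one of its spectral projectors.

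First I would dispose of the equality $\tr(A-B)_+ = \tr[\lbrace A\geq B\rbrace(A-B)]$. By the definition $A_+ = \lbrace A \geq 0\rbrace A\lbrace A \geq 0\rbrace$ applied to $M$, we have $(A-B)_+ = \Pi M \Pi$. Since $\Pi$ commutes with $M$ we get $\Pi M \Pi = \Pi^2 M = \Pi M$, and cyclicity of the trace then yields $\tr(A-B)_+ = \tr[\Pi M] = \tr[\lbrace A\geq B\rbrace(A-B)]$ at once.

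The substance of the lemma is the inequality $\tr[\Pi M] \geq \tr[P M]$ for every $0 \leq P \leq \one$. I would split $M$ into its positive and negative parts relative to $\Pi$: because $\Pi$ commutes with $M$, the off-diagonal blocks $\Pi M(\one - \Pi)$ and $(\one - \Pi)M\Pi$ vanish, so $M = M_+ - M_-$ with $M_+ \coloneqq \Pi M \Pi \geq 0$ and $M_- \coloneqq -(\one - \Pi)M(\one - \Pi) \geq 0$ having orthogonal supports. Two observations then close the argument. On the one hand $\tr[\Pi M] = \tr[M_+]$, since $\Pi M_+ = M_+$ while $\Pi M_- = 0$. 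On the other hand, writing $\tr[P M] = \tr[P M_+] - \tr[P M_-]$, the term $\tr[P M_-]$ is nonnegative as the trace of a product of the positive operators $P$ and $M_-$, whereas $\tr[P M_+] \leq \tr[M_+]$ because $\one - P \geq 0$ and $M_+ \geq 0$ force $\tr[(\one - P)M_+] \geq 0$. Combining gives $\tr[P M] \leq \tr[M_+] = \tr[\Pi M]$, as desired.

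The only analytic input used repeatedly is that the trace of a product of two positive semi-definite operators is nonnegative; everything else is bookkeeping with the spectral projector, so I do not anticipate a genuine obstacle. The one point that deserves a line of care is the vanishing of the cross terms $\Pi M(\one - \Pi)$ and $(\one - \Pi)M\Pi$, which is precisely where the commutation of $M$ with its own spectral projector $\Pi$ is invoked, and which simultaneously underlies the equality and the clean decomposition $M = M_+ - M_-$.
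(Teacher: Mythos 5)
Your proof is correct, and it is essentially the standard argument for this lemma: the paper itself gives no proof, deferring to \cite{ON00}, where the same reasoning appears --- write $A-B = M_+ - M_-$ via the spectral (Jordan) decomposition, then use $\tr[PM_-]\geq 0$ and $\tr[(\one-P)M_+]\geq 0$, both instances of the fact that the trace of a product of positive semi-definite operators is nonnegative. Your observation that positivity of $A$ and $B$ is never used (self-adjointness of $A-B$ suffices) is also accurate.
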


The inverse of the cumulative distribution function (c.d.f.) of a standard normal random variable is defined by
\begin{align}
\Phi^{-1}(\eps) \coloneqq \sup\lbrace z\in\mathbb{R}\colon\Phi(z)\leq \eps\rbrace,\label{eq:invP}
\end{align}
where $\Phi(z) = \frac{1}{\sqrt{2\pi}}\int_{-\infty}^z e^{-t^2/2}dt$. Note that $\Phi(x) = 1-\Phi(-x)$ and $\invP{1-x} = -\invP{x}$. 

Two central quantities in our discussion are the quantum relative entropy $D(\rho\|\tau)$ and the quantum information variance $V(\rho\|\tau)$:
\begin{definition}\label{def:quantum-relative-entropy}
			Let $\rho\in\cD(\cH)$ and $\tau\in\cP(\cH)$.
			\begin{enumerate}[{\normalfont (i)}]
			\item \cite{Ume62} The \emph{quantum relative entropy} $D(\rho\|\tau)$ is defined as 
						\begin{align*}
						D(\rho\|\tau) \coloneqq  \begin{cases}\tr[\rho(\log\rho-\log\tau)] & \text{if }\supp\rho\subseteq\supp\tau\\ \infty & \text{else.}\end{cases}
						\end{align*}
						Note that the von Neumann entropy is given by $S(\rho)=-D(\rho\|\one)$.
			\item \cite{TH13} The \emph{quantum information variance} $V(\rho\|\tau)$ is defined as
						\begin{align*}
						V(\rho\|\tau) \coloneqq  \tr\left[\rho(\log\rho-\log\tau)^2\right]-D(\rho\|\tau)^2.
						\end{align*}
Further, we define
						$\sigma(\rho\|\tau)\coloneqq \sqrt{V(\rho\|\tau)}$ and 
						\begin{align}
						\sigma(\rho)\coloneqq \sigma(\rho\|\one) = \sqrt{V(\rho\|\one)}\label{eq:sigma}.
						\end{align}
			\end{enumerate}
Note that $\sigma(\rho)$ is equal to the standard deviation of the probability distribution formed by the eigenvalues of $\rho$.
In the classical literature, the information variance of a source is sometimes also referred to as \emph{varentropy}.
\end{definition}

In \cite{TH13} the authors introduced the \emph{information spectrum relative entropy} $D_s^\eps(\rho\|\tau)$, defined for $\eps\in (0,1)$, $\rho\in\cD(\cH)$, and $\tau\in\cP(\cH)$ as
\begin{align}\label{eq:inf-spec-rel-entropy}
D_s^\eps(\rho\|\tau)\coloneqq \sup\lbrace \gamma\in\mathbb{R}\colon \tr\left(\rho\left\lbrace\rho\leq 2^\gamma \tau \right\rbrace\right) \leq \eps\rbrace.
\end{align}
This quantity is particularly useful because its second order asymptotic expansion can be employed to obtain the second order asymptotics of quantum hypothesis testing, as shown in \cite{TH13}.
The derivation of our main results is based on the second order asymptotic expansion of the information spectrum relative entropy, which we employ in the following form:
\begin{theorem}[\cite{TH13}]\label{thm:SOA}
Let $\rho\in\cD(\cH)$ with $S=S(\rho)$ and $\sigma=\sigma(\rho)$.
There is a $K>0$ such that for any $L\in\mathbb{R}$ and $n\in\mathbb{N}$ we have
\begin{align}\label{eq:thm-SOA}
\left| \tr\left(\rho^{\otimes n} \left\lbrace \rho^{\otimes n}\leq 2^{-nS+\sqrt{n} L}\one \right\rbrace\right) - \Phi\left(\frac{L}{\sigma}\right) \right| \leq \frac{K}{\sqrt{n}}.
\end{align}
\end{theorem}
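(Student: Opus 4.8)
The plan is to rewrite the left-hand side as a classical tail probability and then invoke the Berry--Esseen theorem. Let $\rho = \sum_i \lambda_i\,|i\X i|$ be a spectral decomposition, so that $\rho^{\otimes n}$ is diagonal in the product eigenbasis $\{|i_1\rangle\otimes\cdots\otimes|i_n\rangle\}$ with eigenvalues $\lambda_{i_1}\cdots\lambda_{i_n}$. Since the projector $\left\lbrace\rho^{\otimes n}\le 2^{-nS+\sqrt n L}\one\right\rbrace$ is diagonal in the same basis and selects exactly those product eigenvectors whose eigenvalue does not exceed $2^{-nS+\sqrt n L}$, I obtain
\begin{align*}
\tr\left(\rho^{\otimes n}\left\lbrace\rho^{\otimes n}\le 2^{-nS+\sqrt n L}\one\right\rbrace\right) = \sum_{\substack{(i_1,\dots,i_n):\\ \lambda_{i_1}\cdots\lambda_{i_n}\,\le\, 2^{-nS+\sqrt n L}}}\lambda_{i_1}\cdots\lambda_{i_n}.
\end{align*}
Interpreting the products $\lambda_{i_1}\cdots\lambda_{i_n}$ as the probabilities of an i.i.d.\ sample $(i_1,\dots,i_n)$ drawn from the eigenvalue distribution $\{\lambda_i\}$, the right-hand side is precisely $\PP\left[\lambda_{i_1}\cdots\lambda_{i_n}\le 2^{-nS+\sqrt n L}\right]$.

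Next I would introduce the i.i.d.\ random variables $X_k \coloneqq -\log\lambda_{i_k}$. A direct computation gives $\mathbb{E}[X_k] = -\sum_i\lambda_i\log\lambda_i = S$ and $\mathrm{Var}(X_k) = \sum_i\lambda_i(\log\lambda_i)^2 - S^2 = V(\rho\|\one) = \sigma^2$, matching the quantities in \Cref{def:quantum-relative-entropy}. Taking $-\log$ of the defining inequality turns the event $\lambda_{i_1}\cdots\lambda_{i_n}\le 2^{-nS+\sqrt n L}$ into $\sum_{k=1}^n X_k \ge nS - \sqrt n L$, so that with $S_n\coloneqq\sum_{k=1}^n X_k$ the trace equals
\begin{align*}
\PP\left[S_n\ge nS-\sqrt n L\right] = \PP\left[\frac{S_n-nS}{\sigma\sqrt n}\ge -\frac{L}{\sigma}\right] = 1 - \PP\left[\frac{S_n-nS}{\sigma\sqrt n}< -\frac{L}{\sigma}\right].
\end{align*}

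The final step is the quantitative central limit theorem. Because $\cH$ is finite-dimensional, the spectrum $\{\lambda_i\}$ is finite and hence the third absolute central moment $\mathbb{E}|X_1-S|^3$ is finite; assuming the nondegenerate case $\sigma>0$, the Berry--Esseen theorem supplies a constant $K>0$ depending only on $\mathbb{E}|X_1-S|^3$ and $\sigma$ (in particular \emph{not} on $L$ or $n$) with $\sup_{x\in\mathbb{R}}\left|\PP[(S_n-nS)/(\sigma\sqrt n)\le x] - \Phi(x)\right|\le K/\sqrt n$. A short limiting argument, using continuity of $\Phi$, extends this bound to the strict inequality $<$. Combining it with the identity $\Phi(L/\sigma) = 1-\Phi(-L/\sigma)$ noted after \eqref{eq:invP} yields
\begin{align*}
\left|\tr\left(\rho^{\otimes n}\left\lbrace\rho^{\otimes n}\le 2^{-nS+\sqrt n L}\one\right\rbrace\right) - \Phi\!\left(\frac{L}{\sigma}\right)\right| = \left|\PP\!\left[\tfrac{S_n-nS}{\sigma\sqrt n}<-\tfrac{L}{\sigma}\right] - \Phi\!\left(-\tfrac{L}{\sigma}\right)\right|\le\frac{K}{\sqrt n},
\end{align*}
which is the claim.

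I expect the difficulty here to lie not in any single hard step but in the bookkeeping. One must check that the operator inequality defining the projector translates faithfully into the classical event, which is immediate since $\rho^{\otimes n}$, $\one$, and the projector are simultaneously diagonal, and one must ensure the Berry--Esseen constant is uniform in $L$, which is automatic because the theorem controls the Kolmogorov distance uniformly in its argument $x$. The passage from $\le$ to $<$ and the matching of the one-sided tail to $\Phi(L/\sigma)$ require a little care but incur no loss in the rate $1/\sqrt n$.
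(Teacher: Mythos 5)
Your proof is correct and follows essentially the same route the paper indicates: the paper cites \cite{TH13} and remarks that since the trace expression depends only on the eigenvalues of $\rho^{\otimes n}$, the theorem reduces to the classical second order asymptotics of Strassen \cite{Str62}, which is exactly your reduction to an i.i.d.\ tail probability for $X_k=-\log\lambda_{i_k}$ followed by the Berry--Esseen theorem. The only caveat, which you correctly flag, is the implicit nondegeneracy assumption $\sigma>0$, but that is already built into the statement via the expression $\Phi(L/\sigma)$.
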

Note however, that the trace expression on the left-hand side of \eqref{eq:thm-SOA} only depends on the eigenvalues of $\rho^{\otimes n}$. Hence, \Cref{thm:SOA} already follows from the second order asymptotics of classical source coding derived by Strassen \cite{Str62}.

\section{Operational setting}\label{sec:operational-setting}
\subsection{Mixed quantum sources}\label{sec:mixed-sources}
A general quantum information source is characterized by an ensemble $\kE=\lbrace p_i,|\psi_i\rangle\rbrace_i$ of pure states (or signals) $|\psi_i\rangle\in\cH$ which are emitted by the source with corresponding probabilities $p_i$. 
We refer to $\kE$ as the {\em{source ensemble}}, and the associated density matrix (or ensemble average state) $\rho=\sum_i p_i\psi_i$ is called the {\em{source state}}. A source is called \emph{memoryless} if there are no correlations between successive signals emitted by the source. Consequently, we can characterize $n$ uses of a memoryless source $\kE$ by the source ensemble $\kE^n=\left\lbrace p_{\underline{i}},|\psi_{\underline{i}}\rangle\right\rbrace_{\underline{i}}$ where ${\underline{i}}\coloneqq i_1i_2\ldots i_n$ is a sequence of indices of length $n$, and we define
\begin{align}\label{eq:index-notation}
p_{\underline{i}} \coloneqq p_{i_1}p_{i_2}\ldots p_{i_n} \qquad\text{and}\qquad |\psi_{\underline{i}}\rangle \coloneqq |\psi_{i_1}\rangle \otimes  |\psi_{i_2}\rangle \otimes \ldots  |\psi_{i_n}\rangle.
\end{align}
The corresponding source state for $n$ uses of the source $\kE$ is given by $\rho^{\otimes n}$. 

We now construct a mixed source consisting of memoryless sources.
To this end, let $\Lambda$ be an arbitrary parameter space with a normalized measure $\mu$, i.e.~$\int_\Lambda d\mu(\lambda) = 1$.
Consider a family of memoryless sources parametrized by $\lambda\in\Lambda$, with source ensemble $\kE_\lambda=\lbrace q_i^{(\lambda)},|\varphi^{(\lambda)}_i\rangle\rbrace_i$ and source state $\rho_\lambda=\sum_i q_i^{(\lambda)}\varphi^{(\lambda)}_i$.
The mixed source is the one that emits all successive signals from the memoryless source $\kE_\lambda$ according to the probability measure $d\mu(\lambda)$.
We denote the mixed source obtained from this construction by $(\rho_\lambda,\dl)_{\lambda\in\Lambda}$.
The source state $\rho^{(n)}$ for $n$ uses of $(\rho_\lambda,\dl)_{\lambda\in\Lambda}$ is given by
\begin{align}\label{eq:general-mixture}
\rho^{(n)} = \int_{\Lambda} \rho_{\lambda}^{\ox n} \dl,
\end{align}
and the corresponding (not necessarily finite) ensemble is given by 
\begin{align}\label{eq:general-mixture-ensemble}
\kEm^{(n)} \coloneqq \left\lbrace d\mu(\lambda) q_{\underline{i_\lambda}}^{(\lambda)}; |\varphi_{\underline{i_\lambda}}^{(\lambda)}\rangle\right\rbrace_{\underline{i_\lambda},\,\lambda\in\Lambda},
\end{align}
where $\underline{i_\lambda}$ is a sequence of indices of length $n$ and $|\varphi_{\underline{i_\lambda}}^{(\lambda)}\rangle$ is a tensor product of $n$ pure states as in \eqref{eq:index-notation} for each $\lambda\in\Lambda$.

Let us consider the special case where the measure $\mu$ has finite support on points $\lambda_1,\dots,\lambda_k\in\Lambda$, corresponding to a discrete probability distribution $\lbrace t_j\rbrace_{j=1}^k$.
Hence, we have $k$ memoryless quantum information sources with source ensembles $\kE_j=\lbrace q_i^{(j)},|\varphi^{(j)}_i\rangle\rbrace$ and source states $\rho_j=\sum_i q_i^{(j)}\varphi^{(j)}_i$ for $j=1,\dots, k$.
The underlying source ensemble for $n$ uses of this mixed source is
\begin{align}
\kEm^{(n)} \coloneqq \left\lbrace t_1q_{\underline{i_1}}^{(1)}, \dots,  t_kq_{\underline{i_k}}^{(k)}; |\varphi_{\underline{i_1}}^{(1)}\rangle,\dots, |\varphi_{\underline{i_k}}^{(k)}\rangle\right\rbrace_{\underline{i_j},\,j=1,\dots,k},\label{eq:mixed-source-ensemble}
\end{align}
and the source state is given by 
\begin{align}\label{eq:blocklength-n-source-state}
\rho^{(n)} \coloneqq \sum_{j=1}^k t_j \rho_j^{\otimes n}.
\end{align}
We denote such a discrete mixed source consisting of $k$ memoryless sources $\rho_1,\dots,\rho_k$ by the tuple $(\lbrace\rho_j\rbrace_{j=1}^k,\lbrace t_j\rbrace_{j=1}^k)$ or simply $(\rho_j,t_j)_{j=1}^k$. 
In the special case of two memoryless sources, $k=2$, we set $t\equiv t_1$ (such that $t_2=1-t$) and write $(\rho_1,\rho_2,t)$ for the resulting mixed source.
The source state for $n$ uses of the mixed source $(\rho_1,\rho_2,t)$ is given by $\rho^{(n)} = t\rho_1^{\ox n} + (1-t)\rho_2^{\ox n}$. 
The parameter $t$ is also referred to as \emph{mixing parameter}.

Finally, we also mention the special case of a mixed source where we have a fixed set of pure states $\lbrace|\varphi_i\rangle\rbrace_i$, and for $\lambda\in\Lambda$ the source $\kE_\lambda$ corresponds to a probability distribution $\lbrace q^{(\lambda)}_i\rbrace_i$ over the pure states $\lbrace|\varphi_i\rangle\rbrace_i$.
That is, in this case we have $\lbrace|\varphi^{(\lambda)}_i\rangle\rbrace_i = \lbrace|\varphi_i\rangle\rbrace_i$ for all $\lambda\in\Lambda$.
The source state $\rho_\lambda$ of the memoryless source $\kE_\lambda$ is then given by $\rho_\lambda=\sum_i q_i^{(\lambda)} |\varphi_i\rangle\langle \varphi_i|$.

\subsection{Quantum source coding}\label{sec:visible-source-coding}
In fixed-length quantum source coding the aim is to store the information emitted by the source in a {\em{compressed state}} $\rho_c \in \cD(\cH_c)$ with $\dim\cH_c < \dim\cH$, such that it can later be decompressed yielding a state which is sufficiently close to the source state $\rho$ with respect to some chosen distance measure. 

There are two different settings \cite{BCF+01,Hay02,Win99a} for the compression part of the protocol outlined above: {\em{visible}} and {\em{blind}}. 
In this paper we only consider the visible setting.\footnote{For a discussion of the blind setting and its comparison to the visible setting, see e.g.~\cite{BCF+01,Hay02,Win99a} or Section V.A in \cite{DL14b}.} 
In this setting, the compressor (say, Alice) knows the identity of the signals $\psi_i$. 
In fact, on each use of the source Alice receives classical information in the form of an index $i$ labelling the signal $\psi_i$ emitted by the source. 
She then uses an \emph{arbitrary} map $\cV\colon \lbrace i\rbrace\rightarrow \cD(\cH_c)$ to encode the signal $\psi_i$ in a state $\cV(i)\in\cD(\cH_c)$. 
We stress that $\cV$ (which we refer to as {\em{visible encoding}}) is not a CPTP map acting on the signals $\psi_i$; Alice simply prepares a quantum state $\cV(i)$ on receiving the index $i$. 
This is in contrast to the \emph{blind setting} of source coding, where the encoder does not have any knowledge about the pure states $\psi_i$ and is therefore required to apply a quantum operation $\cE$ to the source state $\rho$. 
Henceforth, we restrict the discussion to the visible setting. 
In the decompression part of the protocol, the compressed signal $\cV(i)$ is subjected to a quantum operation $\mathfrak{D}\colon \cD(\cH_c)\rightarrow\cD(\cH)$ which we call the \emph{decoding map}.

\subsection{Definition of the second order asymptotic rate}\label{sec:second-order-rates}
Our aim is to derive the \emph{second order asymptotic rate} (or in short, second order rate) for fixed-length visible quantum source coding of a mixed source, whose precise definition we give below. 
Since we only discuss the visible source coding setting in this paper, we will henceforth suppress the attribute `visible' in all definitions.
  
We choose the ensemble average fidelity as the figure of merit in our analysis of fixed-length quantum source coding, defined as follows:
\begin{definition}\label{def:figure-of-merit}
Let $\kE=\lbrace p_i,|\psi_i\rangle \rbrace_i$ be a pure-state ensemble with $|\psi_i\rangle \in \cH$ for all $i$. 
We say that the triple $\cC = (\cV,\kD,M)$ defines a \emph{code} for fixed-length visible source coding if $\cV\colon \lbrace i \rbrace \rightarrow \cD(\cH_c)$ is an arbitrary encoding map,  $\kD\colon \cD(\cH_c)\rightarrow \cD(\cH)$ is a decoding CPTP map, and $\cH_c$ is the compressed Hilbert space with $M\coloneqq \dim\cH_c < \cH$.

The \emph{ensemble average fidelity} $\bF(\kE,\cC)$ of the ensemble $\kE$ and the code $\cC$ is defined as
\begin{align*}
\bar{F}(\kE,\cC) \coloneqq \sumi_i p_i\tr((\kD\circ\cV)(i)\psi_i).
\end{align*}
For a mixed source $(\rho_\lambda,d\mu(\lambda))_{\lambda\in\Lambda}$ as defined in \Cref{sec:mixed-sources} with ensemble $\kEm$ given as in \eqref{eq:general-mixture-ensemble} for $n=1$, the ensemble average fidelity $\bF(\kEm,\cC)$ is correspondingly defined as
\begin{align*}
\bF(\kEm,\cC) \coloneqq \int_{\lambda\in\Lambda} d\mu(\lambda) \sumi_i q_i^{(\lambda)}\tr\left((\kD\circ\cV)(i)\psi_i^{(\lambda)}\right).
\end{align*}
\end{definition}

\noindent This leads to the following definition:
\begin{definition}\label{def:source-coding}
Let $(\rho_\lambda,d\mu(\lambda))_{\lambda\in\Lambda}$ be a mixed source, and let $\eps\in(0,1)$.
For $n\in\mathbb{N}$ let $\kEm^{(n)}$ as defined in \eqref{eq:general-mixture-ensemble} be the source ensemble for $n$ uses of the mixed source $(\rho_\lambda,d\mu(\lambda))_{\lambda\in\Lambda}$.
Given $R\in\mathbb{R}$, we say that any $r\in\mathbb{R}$ is an \emph{$(R,\eps)$-achievable rate} if there exists a sequence $\lbrace\cC_n\rbrace_{\nin}$ of codes $\cC_n=(\cV_n,\kD_n,M_n)$ such that
\begin{align}\label{eq:r-eps-achievable-rate}
\liminf_{n\rightarrow\infty} \bar{F}\left(\kEm^{(n)},\cC_n\right)\geq 1- \eps\qquad \text{and}\qquad \limsup_{n\rightarrow \infty}\frac{\log M_n - nR}{\sqrt{n}} \leq r.
\end{align}
The \emph{second order asymptotic rate} $b\left(R,\eps|\rho\right)$ for $n$ uses of the mixed source $(\rho_\lambda,d\mu(\lambda))_{\lambda\in\Lambda}$ is then defined as the infimum over all $(R,\eps)$-achievable rates $r$.
\end{definition}

\begin{remark}~\label{rem:second-order-implies-first-order}
\begin{enumerate}[(i)]
\item For any $R>0$ the quantity $b\left(R,\eps | \rho\right)$ is only finite if the parameter $R$ equals the optimal first order rate $a$ of the protocol, i.e.~a real number $a$ satisfying
\begin{align}\label{eq:first-order-expansion}
\log M_n = n a + f(n)
\end{align}
with $f(n)\in\cO(\sqrt{n})$. This can be seen as follows: Substituting \eqref{eq:first-order-expansion} in \eqref{eq:r-eps-achievable-rate} of \Cref{def:source-coding}(ii) yields
\begin{align}
\frac{n a - nR}{\sqrt{n}} + \frac{f(n)}{\sqrt{n}} = \sqrt{n}(a-R) + \frac{f(n)}{\sqrt{n}}.\label{eq:first-order-different}
\end{align}
Taking the limit superior in \eqref{eq:first-order-different}, the second term is some constant since $f(n)\in\cO(\sqrt{n})$, whereas the first term diverges to either $+\infty$ if $R < a$ or $-\infty$ if $R>a$.
\item For quantum source coding using a single memoryless source, $a$ is equal to the von Neumann entropy $S(\rho)$ of the source, and \eqref{eq:first-order-expansion} is proven in \cite{Sch95,Win99}.
\end{enumerate}
\end{remark}

\section{Main results}\label{sec:results}
Our main result is the derivation of the second order asymptotic rate for $n$ uses of a mixed source $(\rho_\lambda,d\mu(\lambda))_{\lambda\in\Lambda}$ with source state $\rho^{(n)} = \int_{\Lambda} \rho_\lambda^{\ox n} d\mu(\lambda)$ as defined in \Cref{sec:mixed-sources}.
In order to state our main result, we make the following definition: For a fixed $a>0$ let $\cL_=(a) \coloneqq \lbrace \lambda\in\Lambda\colon S(\rho_\lambda) = a\rbrace$ and $\cL_<(a) \coloneqq \lbrace \lambda\in\Lambda\colon S(\rho_\lambda) < a\rbrace$.
Furthermore, recall that for $\rho\in\cD(\cH)$ we set $\sigma(\rho)=\sqrt{V(\rho\|\one)}$ (cf.~\eqref{eq:sigma}).
We then have:
\begin{theorem}\label{thm:general-mixture}
Let $\Lambda$ be an arbitrary parameter space with a normalized measure $\mu$, that is, $\int_\Lambda d\mu(\lambda) = 1$, and let $(\rho_\lambda,\dl)_{\lambda\in\Lambda}$ be a mixed source. 
Furthermore, let $a>0$, $\eps\in(0,1)$, and define $\sigma_\lambda = \sigma(\rho_\lambda)$ for $\lambda\in\Lambda$.
Then the second order asymptotic rate $b(a,\eps|\rho)$ for $n$ uses of the mixed source $(\rho_\lambda,\dl)_{\lambda\in\Lambda}$ is the solution of the equation
\begin{align*}
\int_{\cL_=(a)}\Phi\left(\frac{b}{\sigma_{\lambda}}\right) \dl + \int_{\cL_<(a)}\dl = 1 - \eps.
\end{align*}
\end{theorem}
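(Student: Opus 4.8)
My overall strategy is to reduce the mixed-source problem to a statement about the information-spectrum quantity for the state $\rho^{(n)}$, and then to apply the second order asymptotic expansion of \Cref{thm:SOA} to each component source separately. The key operational fact I would first establish (presumably proven elsewhere in the paper for visible coding) is that the optimal compression length for a pure-state ensemble with source state $\sigma$ and fidelity threshold $1-\eps$ is governed by the projector rank needed to capture $1-\eps$ of the weight of $\sigma$; concretely, I expect $\log M_n$ to be characterized (up to $\cO(\log n)$) by an information-spectrum entropy $\oH_s^\eps$ or equivalently by $D_s^\eps(\rho^{(n)}\|\one)$. So the first step is: translate \Cref{def:source-coding} into the condition that $r$ is $(a,\eps)$-achievable if and only if $\log M_n = na + r\sqrt{n} + o(\sqrt{n})$ with the trace of $\rho^{(n)}$ on the ``large eigenvalue'' subspace of rank $M_n$ being at least $1-\eps$ asymptotically.

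The central computation is then to evaluate, for $M_n = 2^{na + b\sqrt{n}}$, the quantity $\tr\bigl(\rho^{(n)} \lbrace \rho^{(n)} \geq 2^{-na - b\sqrt{n}}\one\rbrace\bigr)$ in the limit $n\to\infty$. The plan is to exploit the direct-sum/mixture structure $\rho^{(n)} = \int_\Lambda \rho_\lambda^{\ox n}\,\dl$. The crucial analytic input is that the eigenvalues of $\rho_\lambda^{\ox n}$ concentrate: by \Cref{thm:SOA} applied to each $\rho_\lambda$, the weight of $\rho_\lambda^{\ox n}$ lying above the threshold $2^{-na - b\sqrt{n}}$ behaves like $\Phi\bigl(\tfrac{(S(\rho_\lambda)-a)\sqrt{n} - b}{\sigma_\lambda}\bigr)$ up to an $\cO(1/\sqrt{n})$ error. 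I would argue that the different components $\rho_\lambda^{\ox n}$ are asymptotically ``spectrally separated'' enough that the threshold-weight of the mixture decomposes additively over $\lambda$, so that
\begin{align*}
\tr\Bigl(\rho^{(n)}\bigl\lbrace \rho^{(n)} \geq 2^{-na-b\sqrt{n}}\one\bigr\rbrace\Bigr) \;\longrightarrow\; \int_\Lambda \lim_{n\to\infty}\Phi\!\left(\frac{(S(\rho_\lambda)-a)\sqrt{n}-b}{\sigma_\lambda}\right)\dl.
\end{align*}
The pointwise limit of the integrand is what produces the trichotomy: for $\lambda\in\cL_<(a)$ (where $S(\rho_\lambda)<a$) the argument tends to $-\infty$ and $\Phi\to 0$; for $\lambda$ with $S(\rho_\lambda)>a$ the argument tends to $+\infty$ and $\Phi\to 1$; and on the boundary set $\cL_=(a)$ the $\sqrt{n}$ term vanishes, leaving exactly $\Phi(-b/\sigma_\lambda) = 1-\Phi(b/\sigma_\lambda)$. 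Writing the total as $1$ minus the ``above threshold'' weight and setting the error at $1-\eps$ yields the claimed equation $\int_{\cL_=(a)}\Phi(b/\sigma_\lambda)\dl + \int_{\cL_<(a)}\dl = 1-\eps$ after the substitution $\Phi(x)=1-\Phi(-x)$.

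The main obstacle I anticipate is rigorously justifying the spectral decoupling and the interchange of limit and integral in the displayed step. The eigenvalues of a mixture $\sum_j t_j \rho_j^{\ox n}$ are not simply the union of the scaled eigenvalues of the summands—eigenvectors can overlap—so I would need to control the cross terms. I expect the cleanest route is to sandwich $\rho^{(n)}$ between convenient operator inequalities: the mixture lies between $\max_\lambda$-type and $\sum_\lambda$-type bounds on its pinched spectrum, and the projectors $\lbrace\rho_\lambda^{\ox n}\geq c\,\one\rbrace$ for distinct entropy values become asymptotically orthogonal in weight. For the general (possibly continuous) measure $\mu$, establishing dominated convergence requires a uniform-in-$\lambda$ version of the $\cO(1/\sqrt{n})$ bound in \Cref{thm:SOA}; I would either impose mild regularity on the family $\lambda\mapsto(\rho_\lambda,S(\rho_\lambda),\sigma_\lambda)$ or first prove the discrete-$k$ case from \eqref{eq:blocklength-n-source-state} and then pass to the general measure by approximation. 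The achievability and converse directions would each use \Cref{lem:tr-projector} to compare the fidelity of any candidate code against the optimal projective code realizing this threshold.
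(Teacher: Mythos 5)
Your opening reduction is correct for the visible setting, and your converse half essentially coincides with the paper's: the optimal fidelity at compressed dimension $M_n$ equals the sum of the top $M_n$ eigenvalues of $\rho^{(n)}$ (upper bound from \Cref{prop:averaged-fidelity-bound}, lower bound from the projection-encoding computation \eqref{eq:avg-fidelity-expression}), and the converse never requires any spectral analysis of the mixture at all. By linearity, $\tr(Q\rho^{(n)}) = \int_\Lambda \tr(Q\rho_\lambda^{\ox n})\,\dl$ for the rank-$M_n$ projector $Q$, and \Cref{lem:tr-projector} is then applied \emph{separately} to each component with its own projector $\lbrace \rho_\lambda^{\ox n} > 2^{-\gamma}\one\rbrace$ (this is \Cref{lem:fidelity-converse-bound}); the cross terms you worry about simply never arise on this side. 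Moreover, your concern that dominated convergence needs a uniform-in-$\lambda$ error bound is misplaced: the integrands are bounded by $1$ and converge pointwise in $\lambda$ by \Cref{thm:SOA} and \Cref{lem:asym-norm-dominance}, which is all the Dominated Convergence Theorem requires. (A minor slip: the integrand $\Phi\bigl((S(\rho_\lambda)-a)\sqrt{n}-b)/\sigma_\lambda\bigr)$ you display is the \emph{below}-threshold weight, not the above-threshold weight, though this cancels in your final bookkeeping.)

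The genuine gap is the achievability direction. What is needed there is, for each $b>b^*$, a single sequence of subspaces of dimension $2^{na+b\sqrt{n}+o(\sqrt{n})}$ whose projector $\Pi_n$ satisfies $\tr(\Pi_n\rho_\lambda^{\ox n}) \geq \tr\bigl(\rho_\lambda^{\ox n}\lbrace \rho_\lambda^{\ox n} > 2^{-na-b\sqrt{n}}\one\rbrace\bigr)$ for ($\mu$-almost) \emph{every} $\lambda$ simultaneously, and none of your proposed mechanisms supplies it. ``Asymptotic orthogonality of projectors for distinct entropy values'' says nothing about the critical set $\cL_=(a)$, whose members all have entropy exactly $a$ and eigenvalues at the same scale $2^{-na}$ but can have a continuum of different eigenbases (e.g.\ a unitary orbit $\rho_\lambda = U_\lambda\rho_0U_\lambda^\dagger$); a union-of-projectors argument controls the rank only for finitely many components, which is why it proves \Cref{thm:identical-entropies} but not \Cref{thm:general-mixture}; and taking the top-$M_n$ eigenprojector of $\rho^{(n)}$ itself as the code is circular, since lower-bounding the weight it captures again requires exhibiting some low-rank subspace catching all components. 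The paper's resolution is precisely the universal code space $\Upsilon^n_{a,b}$ of \Cref{sec:universal}: it contains the type subspace $\Xi^n_{a,b}(B)$ for \emph{every} basis $B$, hence dominates $\lbrace\rho_\lambda^{\ox n} > 2^{-na-b\sqrt{n}}\one\rbrace$ for every $\lambda$ via \eqref{eq:operator-inequality}, and yet has dimension at most $(n+1)^{d^2+d}\,2^{na+b\sqrt{n}}$ by the nontrivial \Cref{lem:dim-H-phi} of Jozsa \emph{et al.} This polynomial dimension bound is the key idea absent from your proposal, and neither of your fallbacks replaces it: imposing regularity on $\lambda\mapsto\rho_\lambda$ proves a weaker theorem than the one stated, while discretizing $\mu$ both runs into the instability of the partition $\cL_<(a),\cL_=(a),\cL_>(a)$ under perturbation of the entropies (at the $\sqrt{n}$ scale the behaviour is discontinuous in $S(\rho_\lambda)$) and still leaves infinitely many distinct eigenbases inside $\cL_=(a)$ to be captured by one code.
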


If the measure $\mu$ has finite support on points $\lambda_1,\dots,\lambda_k\in\Lambda$, \Cref{thm:general-mixture} reduces to the following 

\begin{corollary}\label{thm:identical-entropies}
Consider a mixed source $\rho=(\rho_j,t_j)_{j=1}^k$, and set $S_j=S(\rho_j)$ and $\sigma_j= \sigma(\rho_j)$ for $j=1,\dots,k$.
For $a>0$ and $\eps\in(0,1)$, the second order asymptotic rate $b(a,\eps|\rho)$ for $n$ uses of the mixed source $\rho=(\rho_j,t_j)_{j=1}^k$ is given by the solution of the equation
\begin{align*}
\sum_{i\colon S_i=a} t_i \Phi\left(\frac{L}{\sigma_i}\right) + \sum_{i\colon S_i<a} t_i = 1-\eps.
\end{align*}
\end{corollary}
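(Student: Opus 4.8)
The plan is to obtain \Cref{thm:identical-entropies} as an immediate specialization of \Cref{thm:general-mixture} to a finitely supported measure; no new analysis is required.

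First I would verify that the discrete mixed source $\rho=(\rho_j,t_j)_{j=1}^k$ is a genuine instance of the general mixed source $(\rho_\lambda,\dl)_{\lambda\in\Lambda}$ covered by \Cref{thm:general-mixture}. Taking $\mu$ supported on the points $\lambda_1,\dots,\lambda_k$ with $\mu(\lbrace\lambda_j\rbrace)=t_j$, the normalization $\sum_{j=1}^k t_j=1$ matches $\int_\Lambda\dl=1$, and the block-length-$n$ source state $\rho^{(n)}=\sum_{j=1}^k t_j\rho_j^{\ox n}$ of \eqref{eq:blocklength-n-source-state} is exactly $\int_\Lambda\rho_\lambda^{\ox n}\dl$ as in \eqref{eq:general-mixture}. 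Hence the hypotheses of \Cref{thm:general-mixture} hold, with $\sigma_{\lambda_j}=\sigma(\rho_j)=\sigma_j$ and $S(\rho_{\lambda_j})=S(\rho_j)=S_j$.

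Next I would evaluate the two integrals appearing in the defining equation of \Cref{thm:general-mixture} for this discrete measure. Since $\int_\Lambda f(\lambda)\dl=\sum_{j=1}^k t_j f(\lambda_j)$ for every integrable $f$, and since the support of $\mu$ meets the level sets in $\cL_=(a)\cap\supp\mu=\lbrace\lambda_j\colon S_j=a\rbrace$ and $\cL_<(a)\cap\supp\mu=\lbrace\lambda_j\colon S_j<a\rbrace$, I obtain
\begin{align*}
\int_{\cL_=(a)}\Phi\!\left(\frac{b}{\sigma_\lambda}\right)\dl=\sum_{j\colon S_j=a}t_j\,\Phi\!\left(\frac{b}{\sigma_j}\right),\qquad \int_{\cL_<(a)}\dl=\sum_{j\colon S_j<a}t_j.
\end{align*}
Substituting these into the equation of \Cref{thm:general-mixture} and writing $L$ for the unknown $b$ reproduces verbatim the equation that defines $b(a,\eps|\rho)$ in the corollary.

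The only step demanding any care is this identification of the discrete construction \eqref{eq:mixed-source-ensemble}--\eqref{eq:blocklength-n-source-state} with the general one \eqref{eq:general-mixture}; beyond that there is essentially no obstacle. In particular no limiting argument, no fresh concentration estimate, and no direct appeal to \Cref{thm:SOA} is needed at this stage, as all of that work already resides in the proof of the general theorem.
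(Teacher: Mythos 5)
Your proposal is correct and matches the paper exactly: the paper presents \Cref{thm:identical-entropies} as the immediate specialization of \Cref{thm:general-mixture} to a measure $\mu$ with finite support on $\lambda_1,\dots,\lambda_k$, with the integrals over $\cL_=(a)$ and $\cL_<(a)$ collapsing to the finite sums, precisely as you argue. No further comment is needed.
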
 

Finally, we consider the special case of a mixed source consisting of two memoryless sources $\rho_1,\rho_2\in\cD(\cH)$ with corresponding source state 
\begin{align*}
\rho^{(n)}=t\rho_1^{\otimes n} + (1-t)\rho_2^{\otimes n}
\end{align*} 
and mixing parameter $t\in(0,1)$. We adhere to the discussion of classical mixed source coding by Nomura and Han \cite{NH13} by considering the following three cases,\footnote{Note that the assumption $S_1>S_2$ in Cases 2 and 3 can be made without loss of generality. } abbreviating $S_i\equiv S(\rho_i)$ and $\sigma_i\equiv \sigma(\rho_i)$ for $i=1,2$:
\begin{enumerate}[\hspace{0.3cm}{Case} 1:]
\item $S_1=S_2$
\item $S_1>S_2, t > \eps$
\item $S_1>S_2, t < \eps$
\end{enumerate}
We state the second order rate in each of the three cases in the following theorem:

\begin{theorem}\label{thm:refined-analysis}
Consider a mixed source $\rho=(\rho_1,\rho_2,t)$ with $\rho_1,\rho_2\in\cD(\cH)$ and $t\in(0,1)$, and set $S_i\coloneqq S(\rho_i)$ and $\sigma_i\coloneqq \sigma(\rho_i)$ for $i=1,2$. For $\eps\in(0,1)$ the second order asymptotic rate $b(a,\eps|\rho)$ for $n$ uses of the mixed source $(\rho_1,\rho_2,t)$ is given by the following expressions:
\begin{enumerate}[{\normalfont (i)}]
\item For $S_1=S_2\equiv S$, we have $b(S,\eps|\rho) = L$ where $L$ is the solution of the equation
\begin{align}\label{eq:L-relation}
t\Phi\left(\frac{L}{\sigma_1}\right) + (1-t)\Phi\left(\frac{L}{\sigma_2}\right) = 1-\eps.
\end{align}

\item For $S_1>S_2$ and $t > \eps$, we have
\begin{align}
b(S_1,\eps|\rho) = -\sigma_1\invP{\frac{\eps}{t}}.
\end{align}
\item For $S_1>S_2$ and $t < \eps$, we have
\begin{align}
b(S_2,\eps|\rho) = -\sigma_2\invP{\frac{\eps-t}{1-t}}.
\end{align}
\end{enumerate}
\end{theorem}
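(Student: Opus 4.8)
The plan is to derive all three formulas as direct specializations of \Cref{thm:identical-entropies} to a two-component mixed source, where $k=2$, $t_1=t$, and $t_2=1-t$. For a stipulated first order rate $a>0$, that corollary characterises $L=b(a,\eps|\rho)$ as the solution of
\begin{align*}
\sum_{i\colon S_i=a} t_i\,\Phi\!\left(\frac{L}{\sigma_i}\right) + \sum_{i\colon S_i<a} t_i = 1-\eps.
\end{align*}
The entire task then reduces to determining, for the value of $a$ named in each case, which of the two sources falls into $\{i\colon S_i=a\}$ and which into $\{i\colon S_i<a\}$, simplifying the equation accordingly, and inverting it. The only analytic facts I expect to need are the continuity and strict monotonicity of $\Phi$ (so that a solution, when it exists, is unique), together with the reflection identity $\invP{1-x}=-\invP{x}$ noted after \eqref{eq:invP}.

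For Case (i) I would take $a=S_1=S_2\equiv S$. Then both sources satisfy $S_i=a$ and neither satisfies $S_i<a$, so the second sum vanishes and the equation is exactly $t\,\Phi(L/\sigma_1)+(1-t)\,\Phi(L/\sigma_2)=1-\eps$, which is \eqref{eq:L-relation}; no closed form is expected here and $L$ remains implicitly defined. For Case (ii) I would take $a=S_1$. Since $S_1>S_2$, we have $\{i\colon S_i=a\}=\{1\}$ and $\{i\colon S_i<a\}=\{2\}$, so the equation becomes $t\,\Phi(L/\sigma_1)+(1-t)=1-\eps$, i.e.\ $\Phi(L/\sigma_1)=1-\eps/t$. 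Inverting $\Phi$ and using the reflection identity gives $L=-\sigma_1\invP{\eps/t}$. The hypothesis $t>\eps$ is used precisely to guarantee $\eps/t\in(0,1)$, so that $\invP{\eps/t}$ is finite and the rate is well defined.

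Case (iii) is the conceptually distinctive one: here I would evaluate at $a=S_2$, the \emph{smaller} entropy, rather than at the maximum $S_1$. With this choice $\{i\colon S_i=a\}=\{2\}$, whereas source $1$ now lies strictly above the threshold ($S_1>S_2=a$) and so belongs to neither index set, giving $\{i\colon S_i<a\}=\emptyset$. The equation collapses to $(1-t)\,\Phi(L/\sigma_2)=1-\eps$, hence $\Phi(L/\sigma_2)=(1-\eps)/(1-t)=1-(\eps-t)/(1-t)$, and the reflection identity yields $b(S_2,\eps|\rho)=-\sigma_2\invP{(\eps-t)/(1-t)}$. Now the assumption $t<\eps$ is what places $(\eps-t)/(1-t)$ in $(0,1)$ and keeps the rate finite.

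What I view as the real content, rather than a technical obstacle, is justifying the \emph{choice} of $a$ in each regime and reconciling it with \Cref{rem:second-order-implies-first-order}(i), which says $b(a,\eps|\rho)$ is finite only when $a$ is the genuine optimal first order rate. The comparison between $t$ and $\eps$ is exactly what selects this rate: when $t>\eps$ the first source cannot be discarded and its entropy $S_1$ must be paid, whereas when $t<\eps$ the first source occurs with probability below the error budget and can be sacrificed, so the effective first order rate drops to $S_2$. I would make this rigorous by verifying, for each stipulated $a$, the finiteness window $\mu(\cL_<(a))\le 1-\eps\le \mu(\cL_<(a))+\mu(\cL_=(a))$ implicit in \Cref{thm:general-mixture}; the strict inequalities $t>\eps$ and $t<\eps$ place $1-\eps$ in the interior of this window and thereby produce finite, unique solutions. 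A minor degenerate point worth flagging is $\sigma_i=0$ (a flat spectrum), where $\Phi(L/\sigma_i)$ becomes a step function; the closed forms of Cases (ii)--(iii) still return the correct value $b=0$, reflecting the absence of second order fluctuations.
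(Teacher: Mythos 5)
Your proposal is correct and takes essentially the same route as the paper: the paper's proof likewise specializes \Cref{thm:identical-entropies} to $k=2$, chooses $a=S$, $a=S_1$, and $a=S_2$ in the three respective cases, and inverts $\Phi$ via the identity $\invP{1-x}=-\invP{x}$ to obtain the closed forms. Your added justification of the choice of $a$ through the comparison of $t$ with $\eps$ matches the paper's \Cref{rem:second-order-implies-first-order}(i) and \Cref{rem:case-1-general}(ii), so nothing essential is missing.
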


\begin{remark}\label{rem:case-1-general}~
\begin{enumerate}[(i)]
\item Upon replacing the quantum sources $\rho_\lambda$ with classical i.i.d.~sources characterized by a random variable $Y_\lambda$, identifying $S(\rho_\lambda)$ with the Shannon entropy $H(Y_\lambda)$, and the quantum information variance $\sigma_\lambda$ with the standard deviation of the random variable $\log Y_\lambda$, \Cref{thm:general-mixture} and \Cref{thm:refined-analysis} reproduce Theorem 8.3 and Theorem 7.1 in \cite{NH13}, respectively, in the case of a finite source alphabet.

\item Recall from \Cref{rem:second-order-implies-first-order}(i) that the statement $b(S_1,\eps|\rho) = -\sigma_1\invP{{\eps}/{t}}<\infty$ in \Cref{thm:refined-analysis}(ii) implies that the first order rate equals $S_1$. In particular, in this case $b(S_2,\eps|\rho) = \infty$.
Similarly, in \Cref{thm:refined-analysis}(iii) the first order rate is given by $S_2$, and $b(S_1,\eps|\rho) = -\infty$.

\item To determine the range of $L$ in \Cref{thm:refined-analysis}(i), assume without loss of generality that $\sigma_1<\sigma_2$. Then, using properties of the c.d.f.~$\Phi$ of a normal distribution and definition \eqref{eq:L-relation} of $L$, it follows easily that
\begin{subequations}\label{eq:L-bounds}
\begin{align}
L &\in[-\sigma_1\invP{\eps},-\sigma_2\invP{\eps}] & &\text{if }\eps\in \left(0,1/2\right),\label{eq:L-bounds-left}\\
L&\in [-\sigma_2\invP{\eps},-\sigma_1\invP{\eps}] & &\text{if }\eps\in \left(1/2,1\right),\label{eq:L-bounds-right}
\end{align}
\end{subequations}
and $L=0$ for $\eps=1/2$. See \Cref{fig:L-plot} for a plot showing a typical example of this.
\end{enumerate}
\end{remark}


\begin{figure}[t]
\small
\centering
\begin{tikzpicture}[scale=1] 
    \begin{axis}[
    width = 0.6\textwidth,
    legend style = {at={(1,1)}, anchor = north east},
    legend cell align = left,
    axis lines = left,
    grid=major, 
    grid style={gray!25}, 
    xlabel = $\eps$,
    xmin = 0,
    xmax = 1,
    xtick = {0,0.25,0.5,0.75,1},
    ]
    \addplot[color=red,mark=none,dashed,semithick] table[x=e,y=u]{mixed.dat};
    \addplot[color=blue,mark=none,semithick] table[x=e,y=l]{mixed.dat};
    \addplot[color=darkgreen,mark=none,dashdotted,semithick] table[x=e,y=o]{mixed.dat};
    \legend{$-\sigma_1\invP{\eps}$,$L$,$-\sigma_2\invP{\eps}$};
    \end{axis}
\end{tikzpicture}
\caption{Plot of the second order asymptotic rate $L$ (blue-solid) defined in \eqref{eq:L-relation} and bounds on $L$ (red-dashed and green-dash-dotted) for $\eps\in(0,1/2)$ \eqref{eq:L-bounds-left} and $\eps\in(1/2,1)$ \eqref{eq:L-bounds-right} for a mixed source $(\rho_1,\rho_2,t)$ with the values $\sigma_1=0.235$, $\sigma_2=0.712$, and $t=0.425$.}
\label{fig:L-plot}
\end{figure}
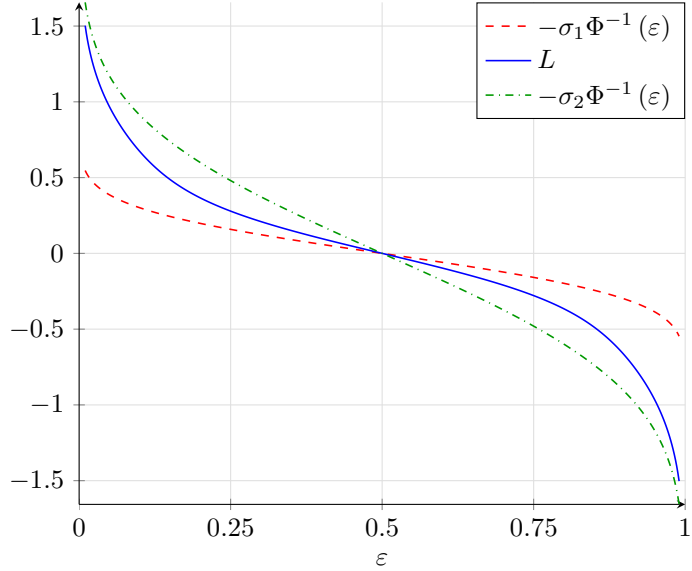

\section{Proofs}\label{sec:proofs}
The following lemma is a direct consequence of \Cref{thm:SOA} and a key ingredient in the proof of \Cref{thm:general-mixture}.
\begin{lemma}\label{lem:asym-norm-dominance}
Let $\rho_1,\rho_2\in\cD(\cH)$ with $S_i\coloneqq S(\rho_i)$ and $\sigma_i\coloneqq \sigma(\rho_i)$ for $i=1,2$. If $S_1 > S_2$, then for any constant $C$ we have:
\begin{align}
\lim_\nlim \tr\left(\rho_2^{\otimes n} \left\lbrace \rho_2^{\otimes n}\leq 2^{-nS_1-\sqrt{n} C}\one \right\rbrace\right) &= 0\label{eq:asym-norm-dominance-0}\\
\lim_\nlim \tr\left(\rho_1^{\otimes n} \left\lbrace \rho_1^{\otimes n}\leq 2^{-nS_2-\sqrt{n} C}\one \right\rbrace\right) &= 1.\label{eq:asym-norm-dominance-1}
\end{align}
\end{lemma}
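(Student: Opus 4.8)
The plan is to apply \Cref{thm:SOA} to each source $\rho_2$ and $\rho_1$ separately, with a carefully chosen affine threshold $L = L(n)$ that absorbs the gap between the two von Neumann entropies $S_1$ and $S_2$. The key observation is that the threshold appearing in the lemma is $2^{-nS_1 - \sqrt{n}C}\one$ applied to $\rho_2^{\ox n}$, but \Cref{thm:SOA} is stated for $\rho^{\ox n}$ compared against $2^{-nS(\rho) + \sqrt{n}L}\one$ with the source's \emph{own} entropy $S(\rho)$. So I must rewrite the exponent to match.

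For \eqref{eq:asym-norm-dominance-0}, I would write the threshold exponent relative to the entropy of $\rho_2$ by completing it as
\begin{align*}
-nS_1 - \sqrt{n}C = -nS_2 - n(S_1 - S_2) - \sqrt{n}C = -nS_2 + \sqrt{n}\, L_n,
\end{align*}
where $L_n \coloneqq -\sqrt{n}(S_1 - S_2) - C$. Applying \Cref{thm:SOA} to $\rho = \rho_2$ (with $S = S_2$, $\sigma = \sigma_2$) gives
\begin{align*}
\left| \tr\left(\rho_2^{\ox n}\left\lbrace \rho_2^{\ox n} \leq 2^{-nS_2 + \sqrt{n}L_n}\one\right\rbrace\right) - \Phi\left(\frac{L_n}{\sigma_2}\right)\right| \leq \frac{K}{\sqrt{n}}.
\end{align*}
Since $S_1 > S_2$, the leading term of $L_n$ is $-\sqrt{n}(S_1 - S_2) \to -\infty$, so $\Phi(L_n/\sigma_2) \to \Phi(-\infty) = 0$, and the error bound $K/\sqrt{n} \to 0$. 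Together these force the trace to $0$, giving \eqref{eq:asym-norm-dominance-0}.

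For \eqref{eq:asym-norm-dominance-1}, I would proceed symmetrically: applying \Cref{thm:SOA} to $\rho = \rho_1$ with threshold exponent $-nS_2 - \sqrt{n}C = -nS_1 + \sqrt{n}\,M_n$ where $M_n \coloneqq \sqrt{n}(S_1 - S_2) - C \to +\infty$. This yields $\Phi(M_n/\sigma_1) \to \Phi(+\infty) = 1$, and the same $K/\sqrt{n}$ error bound drives the trace to $1$. The only subtlety worth flagging is the edge case $\sigma_i = 0$ (a source with flat spectrum), in which $\Phi(L_n/\sigma_i)$ is not literally defined; but the trace expression itself is then a step function of $L_n$ that still tends to the correct limit as $L_n \to \mp\infty$, so the conclusion is unaffected and could be handled by a direct remark rather than via \Cref{thm:SOA}. \textbf{The main point} — and essentially the only real content — is recognizing that $L$ in \Cref{thm:SOA} is allowed to depend on $n$, so that the macroscopic entropy gap $n(S_1 - S_2)$ can be packaged into an $O(\sqrt{n})$-scaled threshold that diverges; no genuine obstacle remains beyond this bookkeeping and the degenerate-variance caveat.
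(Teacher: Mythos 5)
Your proposal is correct and follows essentially the same route as the paper: the paper likewise rewrites the exponent as $-nS_2 - \sqrt{n}(C + f_n)$ with $f_n \coloneqq \sqrt{n}(S_1 - S_2) \to \infty$ (your $L_n$ is exactly $-(C+f_n)$), applies \Cref{thm:SOA} with this $n$-dependent threshold, and concludes from $\Phi(x) \to 0$ as $x \to -\infty$, treating \eqref{eq:asym-norm-dominance-1} symmetrically. Your extra remark on the degenerate case $\sigma_i = 0$ is a reasonable caveat the paper does not address, but it concerns the statement of \Cref{thm:SOA} rather than any difference in method.
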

\begin{proof}
In order to prove \eqref{eq:asym-norm-dominance-0}, define $f_n\coloneqq \sqrt{n}(S_1-S_2)$ and note that $f_n \xrightarrow{\nlim} \infty$ by assumption. We then obtain the following bound for some constant $K>0$:
\begin{align*}
\tr\left(\rho_2^{\otimes n} \left\lbrace \rho_2^{\otimes n}\leq 2^{-nS_1-\sqrt{n} C}\one \right\rbrace\right) &= \tr\left(\rho_2^{\otimes n} \left\lbrace \rho_2^{\otimes n}\leq 2^{-nS_2-\sqrt{n}(C + f_n)}\one \right\rbrace\right)\\
&\leq \Phi\left(-\frac{C+f_n}{\sigma_2}\right) + \frac{K}{\sqrt{n}}
\end{align*}
where the inequality follows from \Cref{thm:SOA}. This yields \eqref{eq:asym-norm-dominance-0} since $\lim_{x\rightarrow -\infty}\Phi(x)=0$. Identity \eqref{eq:asym-norm-dominance-1} is proved along similar lines.
\end{proof}

We also state the following result by Hayashi \cite{Hay02}, which gives an upper bound on the ensemble average fidelity. 
For a proof in our notation, see Proposition 7 in Section V.A of \cite{DL14b}.
\begin{lemma}[\cite{Hay02}]\label{prop:averaged-fidelity-bound}
Let $\kE=\lbrace p_i,\psi_i\rbrace_i$ be an ensemble of pure states and set $\rho=\sum_ip_i\psi_i$. Let $\cV\colon \lbrace i\rbrace\rightarrow \cD(\cH_c)$ be a visible encoding map with $\cH_c$ denoting the compressed Hilbert space with $\dim\cH_c = M$, and 
let $\kD\colon \cD(\cH_c)\rightarrow \cD(\cH)$ denote the decoding CPTP map. Then for the code $\cC=(\cV,\kD,M)$ we have
\begin{align*}
\bar{F}(\kE,\cC)\leq \max\lbrace \tr (P\rho)\colon P\text{ is a projection on }\cH\text{ with }\tr P=M\rbrace.
\end{align*}
\end{lemma}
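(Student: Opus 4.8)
The plan is to turn the optimization over all codes into an eigenvalue estimate and then invoke the Ky Fan variational identity $\max\{\tr(P\rho)\colon P\text{ a rank-}M\text{ projector}\}=\sum_{j=1}^{M}\lambda_j(\rho)$, where $\lambda_1\ge\lambda_2\ge\cdots$ are the eigenvalues of $\rho$. First I would reduce to pure-state encodings. Writing $\kD^*$ for the completely positive, unital adjoint of the decoding map $\kD$, the per-signal overlap is
\begin{align*}
\tr\big((\kD\circ\cV)(i)\,\psi_i\big)=\tr\big(\cV(i)\,\kD^*(\psi_i)\big),
\end{align*}
which is affine in $\cV(i)\in\cD(\cH_c)$. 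Its maximum over the convex set $\cD(\cH_c)$ is therefore attained at an extreme point, i.e.\ a pure state $\cV(i)=|v_i\rangle\langle v_i|$. Consequently,
\begin{align*}
\bar{F}(\kE,\cC)=\sum_i p_i\,\langle v_i|\kD^*(\psi_i)|v_i\rangle\le\sum_i p_i\,\lambda_{\max}\big(\kD^*(\psi_i)\big),
\end{align*}
with equality when each $|v_i\rangle$ is a leading eigenvector of $\kD^*(\psi_i)$, so it remains to bound $\sum_i p_i\,\lambda_{\max}(\kD^*(\psi_i))$ by $\max_P\tr(P\rho)$ uniformly over all decodings $\kD$ with $\dim\cH_c=M$.

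The conceptual skeleton of this bound is clearest for an \emph{isometric} decoding $\kD(\cdot)=V(\cdot)V^\dagger$ with $V\colon\cH_c\hookrightarrow\cH$: then $\kD^*(\psi_i)=V^\dagger\psi_i V$ is rank one with $\lambda_{\max}(\kD^*(\psi_i))=\langle\psi_i|VV^\dagger|\psi_i\rangle=\tr(P\psi_i)$, where $P=VV^\dagger$ is a rank-$M$ projector, and hence $\bar{F}\le\sum_i p_i\tr(P\psi_i)=\tr(P\rho)\le\max_P\tr(P\rho)$ at once. The real work is to show that a \emph{general} CPTP decoding cannot do better. Here the naive Cauchy--Schwarz estimate $|\langle\psi_i|D_a|v_i\rangle|^2\le\langle\psi_i|D_aD_a^\dagger|\psi_i\rangle$ on the Kraus operators $\{D_a\}$ of $\kD$ only yields $\bar{F}\le\tr\!\big(\rho\,\kD(\one_{\cH_c})\big)$, which is too weak: one has $\tr\kD(\one_{\cH_c})=M$ but $\kD(\one_{\cH_c})\not\le\one$, so this bound can exceed $\max_P\tr(P\rho)$. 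The estimate discards the normalization $\tr\cV(i)=1$ of the encoded states, and the point is to retain it together with the trace-preservation constraint $\sum_a D_a^\dagger D_a=\one_{\cH_c}$, which is exactly what confines the effective support of the decoded ensemble to dimension $M$.

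I would attempt to convert this $M$-dimensional constraint into the eigenvalue sum using \Cref{lem:tr-projector} together with the dual identity $\sum_{j=1}^M\lambda_j(\rho)=\min_{\mu\ge0}\big[\mu M+\tr(\rho-\mu\one)_+\big]$. Applying the lemma signal by signal with $P=\psi_i$ gives $\lambda_{\max}(\kD^*(\psi_i))-\mu=\langle\psi_i|\kD(Q_i)|\psi_i\rangle-\mu\le\tr(\kD(Q_i)-\mu\one)_+$, where $Q_i=|v_i\rangle\langle v_i|$ is rank one. However, summing these estimates does \emph{not} by itself reproduce the factor $M$, precisely because each $Q_i$ has trace one. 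This is the crux and the main obstacle I expect: the dimension $M$ does not enter signal-by-signal but only collectively, through the fact that a single decoding map $\kD$—with the fixed budget $\sum_a D_a^\dagger D_a=\one_{\cH_c}$—must serve all signals simultaneously. The real argument (that of Hayashi) must therefore track how this shared, trace-preserving budget prevents an adversarial decoding with many Kraus operators from routing different signals into different $M$-dimensional subspaces; it is this global constraint, rather than any per-signal estimate, that ultimately forces the single-projector bound $\bar{F}\le\max_P\tr(P\rho)$.
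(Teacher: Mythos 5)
Your proposal does not prove the lemma: it establishes the bound only for isometric decodings and then stops, explicitly deferring the general case to ``the real argument (that of Hayashi)''. Since the entire content of the statement is that an \emph{arbitrary} CPTP decoder acting on an $M$-dimensional input cannot beat a single rank-$M$ projector, this is a genuine gap, not a detail. To be clear about what is right: the reduction to pure encodings by convexity is correct, the bound $\bF(\kE,\cC)\le\sum_i p_i\,\lambda_{\max}(\kD^*(\psi_i))$ is correct, and your diagnosis that the naive Cauchy--Schwarz bound $\tr\bigl(\rho\,\kD(\one_{\cH_c})\bigr)$ is too weak is also correct (the decoder $\kD(X)=\tr(X)\,\omega$ makes it as large as $M\lambda_{\max}(\rho)$). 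But a correct diagnosis of the obstacle is not a proof, and your attempted repair via \Cref{lem:tr-projector} indeed fails for the reason you yourself give. Note also that the paper contains no proof of this lemma to compare against: it quotes the result from \cite{Hay02} and points to Proposition 7 of \cite{DL14b} for a proof, so the relevant comparison is with that cited argument.

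The cited argument closes your gap with two ideas, both absent from the proposal. First, instead of Kraus operators, take a Stinespring dilation $\kD(\cdot)=\tr_E\bigl(V\cdot V^\dagger\bigr)$ with an isometry $V\colon\cH_c\rightarrow\cH\ox\cH_E$; then $P\coloneqq VV^\dagger$ is a rank-$M$ projector on the \emph{enlarged} space, so your isometric skeleton survives in full generality, just one level up. Indeed, for pure encodings $\cV(i)=|v_i\rangle\langle v_i|$, writing $|e_i\rangle$ for the normalized version of $(\langle\psi_i|\ox\one_E)V|v_i\rangle$, Cauchy--Schwarz gives
\begin{align*}
\tr\bigl(\psi_i\,\kD(\cV(i))\bigr)
=\bigl\|(\langle\psi_i|\ox\one_E)V|v_i\rangle\bigr\|^2
=\bigl|\langle\psi_i\ox e_i|V|v_i\rangle\bigr|^2
\le\langle\psi_i\ox e_i|P|\psi_i\ox e_i\rangle ,
\end{align*}
and summing yields $\bF(\kE,\cC)\le\tr(PX)$ with $X\coloneqq\sum_i p_i\,\psi_i\ox e_i$, a \emph{separable} state on $\cH\ox\cH_E$ satisfying $\tr_E X=\rho$. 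Second --- and this is precisely the ``global constraint'' you gestured at but did not supply --- one must descend from the rank-$M$ projector on $\cH\ox\cH_E$ back to one on $\cH$. By the Ky Fan principle, $\tr(PX)\le\sum_{j=1}^{M}\lambda_j(X)$ (eigenvalues in decreasing order), and because $X$ is separable its spectrum is majorized by that of its marginal $\rho$ (the Nielsen--Kempe theorem that separable states are more disordered globally than locally), whence $\sum_{j=1}^{M}\lambda_j(X)\le\sum_{j=1}^{M}\lambda_j(\rho)=\max\lbrace\tr(P\rho)\colon P\text{ projection, }\tr P=M\rbrace$. This majorization step is unavoidable in one form or another: as your own counterexamples show, no per-signal estimate can produce the factor $M$, and it is exactly here that your proposal ends while the actual proof begins.
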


We can now prove an upper bound on the ensemble average fidelity that we need for proving the converse bounds of \Cref{thm:general-mixture} and \Cref{thm:refined-analysis}.
\begin{lemma}\label{lem:fidelity-converse-bound}
Let $(\lbrace\rho_j\rbrace_{j=1}^k,\lbrace t_j\rbrace_{j=1}^k)$ be a mixed source with corresponding source state $\rho=\sum_{j=1}^{k} t_j\rho_j$ and ensemble $\kEm$ defined in \eqref{eq:mixed-source-ensemble} for $n=1$. For any code $\cC=(\cV,\kD,M)$ and $\gamma\in\mathbb{R}$, the ensemble average fidelity satisfies 
\begin{align*}
\bF(\kEm,\cC) \leq 1 - \sum_{j=1}^k t_j \tr(\rho_j\lbrace\rho_j\leq 2^{-\gamma}\one\rbrace) + 2^{-\gamma+\log M}.
\end{align*}
\end{lemma}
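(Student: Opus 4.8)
The plan is to combine the single-source fidelity bound of \Cref{prop:averaged-fidelity-bound} with the observation that the ensemble average fidelity of the mixed source decomposes as a convex combination over the constituent sources. First I would note that since the ensemble $\kEm$ in \eqref{eq:mixed-source-ensemble} is built by labelling the signals of each $\rho_j$ with weight $t_j$, the definition of $\bF$ gives
\begin{align*}
\bF(\kEm,\cC) = \sum_{j=1}^k t_j\, \bF(\kE_j,\cC),
\end{align*}
where $\kE_j$ denotes the ensemble underlying $\rho_j$ and the \emph{same} code $\cC=(\cV,\kD,M)$ is used throughout. Applying \Cref{prop:averaged-fidelity-bound} to each $\kE_j$ individually would then yield $\bF(\kEm,\cC)\leq \sum_j t_j\,\tr(P_j\rho_j)$ for optimal projectors $P_j$ of rank $M$, but this is not yet the stated bound because the $P_j$ differ and the dimension appears only implicitly.

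The key idea is therefore to pass from the projector bound to a fixed, source-independent threshold. For each $j$ I would bound $\tr(P\rho_j)$ uniformly over all rank-$M$ projectors $P$ using the threshold projector $\lbrace\rho_j\leq 2^{-\gamma}\one\rbrace$ and its complement. Splitting $\rho_j = \lbrace\rho_j> 2^{-\gamma}\one\rbrace\rho_j\lbrace\cdots\rbrace + \lbrace\rho_j\leq 2^{-\gamma}\one\rbrace\rho_j\lbrace\cdots\rbrace$ and inserting this into $\tr(P\rho_j)$, the contribution from the ``large-eigenvalue'' part is at most $1-\tr(\rho_j\lbrace\rho_j\leq 2^{-\gamma}\one\rbrace)$, while the ``small-eigenvalue'' part contributes at most $\tr(P\lbrace\rho_j\leq 2^{-\gamma}\one\rbrace\rho_j)\leq 2^{-\gamma}\tr(P)=2^{-\gamma}M$, since on the support of $\lbrace\rho_j\leq 2^{-\gamma}\one\rbrace$ the eigenvalues of $\rho_j$ are bounded by $2^{-\gamma}$ and $\tr P = M$. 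This is precisely the standard splitting behind the converse of Schumacher compression, and it is where the $2^{-\gamma+\log M}=2^{-\gamma}M$ term enters.

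Combining these two estimates gives, for every rank-$M$ projector $P$,
\begin{align*}
\tr(P\rho_j) \leq 1 - \tr(\rho_j\lbrace\rho_j\leq 2^{-\gamma}\one\rbrace) + 2^{-\gamma}M,
\end{align*}
and taking the convex combination $\sum_j t_j(\cdots)$ together with $\sum_j t_j = 1$ yields the claimed inequality. I expect the only genuine subtlety to be justifying the convex decomposition of $\bF(\kEm,\cC)$ cleanly from \Cref{def:figure-of-merit} and ensuring that \Cref{prop:averaged-fidelity-bound} is applied legitimately with a common code $\cC$ — the operator-inequality manipulations themselves are elementary, relying only on $0\leq P\leq\one$, $\tr P = M$, and the spectral bound on the small-eigenvalue block. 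The main obstacle, if any, is notational bookkeeping in the first step rather than a conceptual difficulty.
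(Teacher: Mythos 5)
Your proof is correct, but it is organized differently from the paper's, and the difference is worth noting. You decompose the fidelity first, writing $\bF(\kEm,\cC)=\sum_j t_j\,\bF(\kE_j,\cC_j)$ with the encoder restricted to the $j$-th sub-ensemble, and then invoke \Cref{prop:averaged-fidelity-bound} $k$ times, once per constituent source, obtaining possibly different rank-$M$ projectors $P_j$. The paper instead observes that $\kEm$ defined in \eqref{eq:mixed-source-ensemble} is itself a pure-state ensemble with average state $\rho=\sum_j t_j\rho_j$, so \Cref{prop:averaged-fidelity-bound} applies \emph{once}, yielding a single projector $Q$ with $\tr Q=M$ and $\bF(\kEm,\cC)\leq\tr(Q\rho)$; only then is linearity used, $\tr(Q\rho)=\sum_j t_j\tr(Q\rho_j)$. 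This ordering dissolves exactly the subtlety you flag at the end — there is no need to justify restricting a common code to each sub-ensemble, because the fidelity itself is never decomposed. What your route buys in exchange is a slightly stronger intermediate statement (the bound holds componentwise, with a different optimal projector per source), though that generality is not needed for the lemma. The core estimate is the same in both proofs: for any $0\leq P\leq\one$ with $\tr P=M$,
\begin{align*}
\tr(P\rho_j)\leq 1-\tr\bigl(\rho_j\lbrace\rho_j\leq 2^{-\gamma}\one\rbrace\bigr)+2^{-\gamma}M,
\end{align*}
which you derive by splitting $\rho_j$ into its large- and small-eigenvalue spectral blocks, and which the paper obtains by adding and subtracting $2^{-\gamma}\one$ and invoking \Cref{lem:tr-projector}; these are the same elementary computation packaged differently. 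Both proofs then conclude by summing against $\sum_j t_j=1$.
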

\begin{proof}
By \Cref{prop:averaged-fidelity-bound} there is a projection $Q$ with $\tr Q = M$ such that $\bF(\kEm,\cC)\leq \tr (Q\rho)$. For arbitrary $\gamma\in\mathbb{R}$, we then compute:
\begin{align*}
\bF(\kEm,\cC) &\leq \tr (Q\rho)\\
&= \sum_{j=1}^k t_j \tr Q\rho_j\\
&= \sum_{j=1}^k t_j \tr [Q(\rho_j-2^{-\gamma}\one)] + 2^{-\gamma}\tr Q\\
&\leq \sum_{j=1}^k t_j \tr[\lbrace \rho_j> 2^{-\gamma}\one_n\rbrace (\rho_j-2^{-\gamma}\one)] + 2^{-\gamma + \log M}\\
&= 1 - 2^{-\gamma}\tr\one - \sum_{j=1}^k t_j \tr(\lbrace\rho_j\leq 2^{-\gamma}\one\rbrace (\rho_j-2^{-\gamma}\one)) + 2^{-\gamma + \log M}\\
&= 1 - 2^{-\gamma}\tr\one - \sum_{j=1}^k t_j \tr(\rho_j\lbrace\rho_j\leq 2^{-\gamma}\one\rbrace)\\
&\qquad {} + 2^{-\gamma}\sum_{j=1}^k t_j  \tr\lbrace\rho_j\leq 2^{-\gamma}\one\rbrace + 2^{-\gamma + \log M}\\
&\leq 1 - \sum_{j=1}^k t_j \tr(\rho_j\lbrace\rho_j\leq 2^{-\gamma}\one\rbrace) + 2^{-\gamma+\log M}
\end{align*}
where we used \Cref{lem:tr-projector} in the second inequality, the identity $\lbrace \rho_j> 2^{-\gamma}\one\rbrace = \one - \lbrace\rho_j\leq 2^{-\gamma}\one\rbrace$ in the third equality, and $\lbrace\rho_j \leq 2^{-\gamma}\one\rbrace \leq \one$ in the last inequality.
\end{proof}

We also record the following simple observation: Let $A,B,C\in\cP(\cH)$ be pairwise commuting operators with $B\leq C$. Then we have
$\lbrace A\leq B\rbrace \leq \lbrace A\leq C\rbrace$,
which can easily be seen to be true by considering a common eigenbasis of $A$, $B$, and $C$ and checking the corresponding relation in the scalar case. We will use this result in the following form:
\begin{lemma}\label{lem:brace-projectors}
Let $a,b\in\mathbb{R}$ with $a\leq b$, then for any $X\geq 0$ we have 
\begin{align*}
\lbrace X\leq 2^{-b}\one \rbrace \leq \lbrace X\leq 2^{-a}\one\rbrace.
\end{align*}
\end{lemma}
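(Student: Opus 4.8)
The statement of \Cref{lem:brace-projectors} is the scalar-eigenvalue specialization of the monotonicity observation stated immediately before it, so the plan is to reduce it to that observation by an appropriate choice of commuting operators. Concretely, I would apply the preceding result with $A = X$, $B = 2^{-b}\one$, and $C = 2^{-a}\one$. These three operators pairwise commute trivially (two of them are scalar multiples of the identity, which commute with everything), so the only hypothesis to verify is $B \leq C$, i.e.\ $2^{-b}\one \leq 2^{-a}\one$. This is immediate: since $a \leq b$ and the base-$2$ exponential is monotone increasing, we have $2^{-b} \leq 2^{-a}$ as real numbers, whence $2^{-b}\one \leq 2^{-a}\one$ as operators.

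With the hypotheses of the general observation met, its conclusion reads $\lbrace X \leq 2^{-b}\one\rbrace \leq \lbrace X \leq 2^{-a}\one\rbrace$, which is exactly the claim. The one point worth spelling out, if one does not simply invoke the already-stated observation, is the scalar verification: working in an eigenbasis of $X$ with eigenvalues $x_i \geq 0$, the projector $\lbrace X \leq 2^{-b}\one\rbrace$ acts as $1$ on eigenvectors with $x_i \leq 2^{-b}$ and as $0$ otherwise; the condition $x_i \leq 2^{-b}$ implies $x_i \leq 2^{-a}$ because $2^{-b} \leq 2^{-a}$, so every eigenvector in the range of the first projector also lies in the range of the second, giving the operator inequality.

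I do not anticipate any genuine obstacle here, as the lemma is an essentially trivial restatement designed purely for convenient citation later in the proofs. The only thing to be careful about is the direction of the inequality: a larger exponent $b$ gives a \emph{smaller} threshold $2^{-b}$, hence a \emph{smaller} projector, and it is easy to flip a sign by inattention. Keeping track of the correspondence ``larger exponent $\Rightarrow$ smaller threshold $\Rightarrow$ smaller projector'' is all that is required to avoid this pitfall.
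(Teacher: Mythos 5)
Your proposal is correct and matches the paper's own treatment exactly: the paper presents \Cref{lem:brace-projectors} as the specialization of the preceding commuting-operator observation to $A=X$, $B=2^{-b}\one$, $C=2^{-a}\one$, with the scalar (eigenbasis) verification as the underlying justification. Your care about the inequality direction (larger exponent $\Rightarrow$ smaller threshold $\Rightarrow$ smaller projector) is exactly the right point to watch, and nothing is missing.
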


For the remainder of this section, we abbreviate $\rho^n\equiv \rho^{\otimes n}$.

\subsection{Universal source code achieving second order asymptotic rates}\label{sec:universal}
In this section we construct a universal source code that, given parameters $a\in\mathbb{R}$ (which is to be chosen later as the first order rate) and $\eps\in(0,1)$, achieves a second order asymptotic rate $b(a,\eps|\rho)$ for any $\rho\in\cD(\cH)$. Our construction relies on ideas taken from papers by Jozsa \emph{et al.~}\cite{JHHH98} and Hayashi \cite{Hay08}.

Let $\cX=\lbrace 1,\dots,d\rbrace$. The type $P_{\underline{x}}$ of a sequence $\underline{x}=x_1\dots x_n\in\cX^n$ is the empirical distribution of the letters of $\cX$ in $\underline{x}$, that is, $P_{\underline{x}}(x) = \frac{1}{n}\sum_{i=1}^n \delta_{x_i,x}$ for all $x\in\cX$. We denote by $\cT_n$ the set of all types, and for a type $P\in\cT_n$ we denote by $T_P^n\subset \cX^n$ the set of sequences of type $P$. Following \cite{Hay08}, for $a,b\in\mathbb{R}$ we define
\begin{align*}
T_n(a,b) \coloneqq \bigcup\left\lbrace T_P^n\colon P\in\cT_n\text{ with }|T_P^n|\leq 2^{an+b\sqrt{n}} \right\rbrace  \subset \cX^n.
\end{align*}
A simple type-counting argument \cite{CK11} shows that 
\begin{align*}
\left| T_n(a,b)\right| \leq (n+1)^d 2^{an+b\sqrt{n}}.
\end{align*}
Let now $B=\lbrace |e_1\rangle,\dots,|e_d\rangle\rbrace$ be a basis of $\cH$. As in \cite{JHHH98}, we define the subspace 
\begin{align*}
\Xi^n_{a,b}(B) \coloneqq \spn\lbrace |e_{\ul{i}}\rangle\in B^{\ox n}\colon \ul{i}\in T_n(a,b)\rbrace,
\end{align*}
that is, $\Xi^n_{a,b}(B)$ is the span of basis vectors of the product basis $B^{\ox n}$ of $\cH^{\ox n}$ labelled by sequences in $T_n(a,b)$. The code space $\Upsilon^n_{a,b}$ of the universal source code is now obtained by varying $B$ over all bases of $\cH$. More precisely, we define $\Upsilon^n_{a,b}$ as the smallest subspace of $\cH^{\ox n}$ containing $\Xi^n_{a,b}(B)$ for all bases $B$ of $\cH$. To estimate the size of $\Upsilon^n_{a,b}$, we use the following 

\begin{lemma}[\cite{JHHH98}]\label{lem:dim-H-phi}
Let $|\phi\rangle\in\cH^{\ox n}$ with $\dim\cH=d$, and let $\cH_\phi\coloneqq \spn\lbrace A^{\ox n}|\phi\rangle \colon A\in \cB(\cH)\rbrace$, then $\dim \cH_\phi \leq (n+1)^{d^2}$.
\end{lemma}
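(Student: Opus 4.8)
The plan is to bound $\dim\cH_\phi$ by viewing the vector-valued function $A\mapsto A^{\ox n}|\phi\rangle$ as a tuple of homogeneous polynomials in the entries of $A$, and then simply to count those polynomials. Writing $A=(a_{kl})_{k,l=1}^d$ with respect to the basis $B=\lbrace|e_1\rangle,\dots,|e_d\rangle\rbrace$ and expanding $|\phi\rangle=\sum_{\ul j}\phi_{\ul j}|e_{\ul j}\rangle$ over product-basis vectors indexed by sequences $\ul j = j_1\cdots j_n$, one has
\[
A^{\ox n}|\phi\rangle = \sum_{\ul i,\ul j}\Big(\prod_{s=1}^n a_{i_s j_s}\Big)\phi_{\ul j}\,|e_{\ul i}\rangle,
\]
so that each of the $d^n$ components is a homogeneous polynomial of degree exactly $n$ in the $d^2$ scalar variables $\lbrace a_{kl}\rbrace$.

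First I would introduce the linear evaluation map $\Psi$ from the dual space $(\cH^{\ox n})^*$ into the space $\cP_n$ of homogeneous degree-$n$ polynomials in the $d^2$ variables $a_{kl}$, defined by $\Psi(f)=\big(A\mapsto f(A^{\ox n}|\phi\rangle)\big)$. This map is linear, and its kernel is exactly the annihilator of $\cH_\phi=\spn\lbrace A^{\ox n}|\phi\rangle\rbrace$, since $f$ vanishes on every vector $A^{\ox n}|\phi\rangle$ if and only if it vanishes on their span. By rank--nullity, $\rk\Psi = \dim(\cH^{\ox n})-\dim(\text{annihilator of }\cH_\phi) = \dim\cH_\phi$, and since the image of $\Psi$ lies inside $\cP_n$ we obtain $\dim\cH_\phi=\rk\Psi\leq\dim\cP_n$.

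It then remains to count monomials: $\dim\cP_n$ equals the number of monomials of degree $n$ in $d^2$ variables, namely $\binom{n+d^2-1}{d^2-1}$. This is bounded by $(n+1)^{d^2-1}\leq (n+1)^{d^2}$ by the same elementary type-counting estimate already used for $|T_n(a,b)|$ (with $d^2$ in place of $d$), which completes the proof.

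The argument is essentially linear algebra together with counting, so I expect no serious obstacle; the only point requiring care is the duality step identifying $\rk\Psi$ with $\dim\cH_\phi$ (equivalently, the observation that $\dim\cH_\phi$ is the rank of the coefficient matrix whose rows are indexed by basis vectors of $\cH^{\ox n}$ and whose columns are indexed by the finitely many monomials). I would also note, as an alternative route, that the bound follows from Schur--Weyl duality, which identifies $\spn\lbrace A^{\ox n}:A\in\cB(\cH)\rbrace$ with the commutant $\bigoplus_\lambda\cB(U_\lambda)\ox\one_{V_\lambda}$ of the $S_n$-action, combined with the Weyl dimension bound $\dim U_\lambda\leq (n+1)^{d(d-1)/2}$ and the bound on the number of relevant diagrams $\lambda$; the polynomial-counting route is, however, shorter and self-contained.
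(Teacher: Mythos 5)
Your proof is correct. The paper does not prove \Cref{lem:dim-H-phi} itself but imports it from \cite{JHHH98}, and your argument is essentially the one given there: each component of $A^{\ox n}|\phi\rangle$ is a homogeneous degree-$n$ polynomial in the $d^2$ entries of $A$, so $\cH_\phi$ lies in the span of the finitely many fixed coefficient vectors indexed by monomials; your rank--nullity/duality packaging of this counting step is a cosmetic variant (as you yourself note via the coefficient-matrix remark), and your count $\binom{n+d^2-1}{d^2-1}\leq(n+1)^{d^2-1}$ is even slightly sharper than the stated bound $(n+1)^{d^2}$.
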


We now obtain:

\begin{lemma}\label{lem:size-of-code-space}
With the above definitions, the dimension of the code space $\Upsilon^n_{a,b}\subseteq\cH^{\ox n}$ can be estimated as
\begin{align*}
\dim \Upsilon^n_{a,b} \leq (n+1)^{d^2+d} 2^{an+b\sqrt{n}}.
\end{align*}
\end{lemma}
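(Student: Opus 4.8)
The plan is to reduce the dimension count for $\Upsilon^n_{a,b}$ to an application of \Cref{lem:dim-H-phi} by recasting the variation over all bases $B$ of $\cH$ as a variation over operators $A\in\cB(\cH)$ acting on a single fixed reference basis, and then invoking subadditivity of dimension.

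First I would fix a reference basis $B_0=\lbrace|f_1\rangle,\dots,|f_d\rangle\rbrace$ of $\cH$ and, for each basis $B=\lbrace|e_1\rangle,\dots,|e_d\rangle\rbrace$, introduce the (invertible) operator $A_B\in\cB(\cH)$ defined by $A_B|f_j\rangle=|e_j\rangle$. Writing $|f_{\ul{i}}\rangle\coloneqq|f_{i_1}\rangle\otimes\cdots\otimes|f_{i_n}\rangle$, the tensor-product structure gives $|e_{\ul{i}}\rangle=A_B^{\ox n}|f_{\ul{i}}\rangle$ for every $\ul{i}\in\cX^n$. Hence each generator $|e_{\ul{i}}\rangle$ of $\Xi^n_{a,b}(B)$ with $\ul{i}\in T_n(a,b)$ lies in the subspace $\cH_{|f_{\ul{i}}\rangle}=\spn\lbrace A^{\ox n}|f_{\ul{i}}\rangle\colon A\in\cB(\cH)\rbrace$, since $A_B\in\cB(\cH)$.

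This yields the inclusion $\Xi^n_{a,b}(B)\subseteq\sum_{\ul{i}\in T_n(a,b)}\cH_{|f_{\ul{i}}\rangle}$, whose right-hand side is a fixed subspace independent of $B$. As $\Upsilon^n_{a,b}$ is by definition the smallest subspace of $\cH^{\ox n}$ containing $\Xi^n_{a,b}(B)$ for all $B$, we obtain $\Upsilon^n_{a,b}\subseteq\sum_{\ul{i}\in T_n(a,b)}\cH_{|f_{\ul{i}}\rangle}$. Bounding each summand by \Cref{lem:dim-H-phi} via $\dim\cH_{|f_{\ul{i}}\rangle}\leq(n+1)^{d^2}$ and using subadditivity of dimension under sums of subspaces then gives
\begin{align*}
\dim\Upsilon^n_{a,b}\leq|T_n(a,b)|\,(n+1)^{d^2}\leq(n+1)^{d^2+d}\,2^{an+b\sqrt{n}},
\end{align*}
where the final inequality inserts the type-counting bound $|T_n(a,b)|\leq(n+1)^d2^{an+b\sqrt{n}}$ recorded above.

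I do not anticipate a genuine obstacle here; the content is essentially the reformulation carried out in the first step. The only point meriting care is that bases produce only invertible operators $A_B$, whereas \Cref{lem:dim-H-phi} is phrased for all $A\in\cB(\cH)$. This is harmless for an upper bound: we merely need $A_B^{\ox n}|f_{\ul{i}}\rangle\in\cH_{|f_{\ul{i}}\rangle}$, which is immediate from $A_B\in\cB(\cH)$, so only the inclusion (not equality) is used and no density argument is required.
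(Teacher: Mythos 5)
Your proposal is correct and follows essentially the same route as the paper's own proof: fix a reference basis, transport every other basis onto it by an operator in $\cB(\cH)$ (the paper uses a unitary, then relaxes to arbitrary $A\in\cB(\cH)$), and combine \Cref{lem:dim-H-phi} with the type-counting bound $|T_n(a,b)|\leq(n+1)^d2^{an+b\sqrt{n}}$. Your version merely makes explicit the final step the paper leaves implicit, namely writing the containing space as $\sum_{\ul{i}\in T_n(a,b)}\cH_{|f_{\ul{i}}\rangle}$ and invoking subadditivity of dimension.
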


\begin{proof}
Here, we closely follow an argument in \cite{JHHH98}. First, let $B_0$ be a fixed basis of $\cH$. Then any other basis $B$ can be obtained from $B_0$ by applying some unitary operator $U$ on the basis vectors of $B_0$. As $\Xi^n_{a,b}(B)$ is the span of tensor products of elements in $B$, we have
\begin{align*}
\Xi^n_{a,b}(B) = \lbrace U^{\ox n} |\phi\rangle\colon |\phi\rangle\in \Xi^n_{a,b}(B_0)\rbrace.
\end{align*}
Hence, the following holds for the code space $\Upsilon^n_{a,b}$:
\begin{align*}
\Upsilon^n_{a,b} &= \spn\lbrace U^{\ox n}|\phi\rangle\colon U\in\cU(d), |\phi\rangle\in \Xi^n_{a,b}(B_0)\rbrace\\
&\subset \spn\lbrace A^{\ox n}|\phi\rangle\colon A\in \cB(\cH), |\phi\rangle\in \Xi^n_{a,b}(B_0)\rbrace
\end{align*}
As $\dim\Xi^n_{a,b}(B_0) \leq |T_n(a,b)|\leq (n+1)^d 2^{an+b\sqrt{n}}$, the claim now follows from \Cref{lem:dim-H-phi}.
\end{proof}

\begin{proposition}[Universal code achieving second order rate]\label{prop:universal-code}
Let $\kE=\lbrace p_i,\psi_i\rbrace_i$ be the pure-state ensemble of an arbitrary memoryless quantum source with associated source state $\rho\in\cD(\cH)$, and abbreviate $S\equiv S(\rho)$ and $\sigma\equiv \sigma(\rho)$.
Let $\Pi_n$ be the projector onto the code space $\Upsilon^n_{S,b}$ defined as above, and consider the visible encoding map
\begin{align}
\cV_n\colon \ul{i} &\longmapsto \frac{\Pi_n \psi_{\ul{i}}\Pi_n}{\tr(\Pi_n\psi_{\ul{i}})}.\label{eq:visible-encoder}
\end{align}
We set $M_n\coloneqq \dim \Upsilon^n_{S,b}$, and define the decoding operation $\kD_n\colon \Upsilon^n_{S,b}\rightarrow \cH^{\ox n}$ as the trivial embedding. For $n$ uses of the source $\rho$, the sequence $\lbrace \cC_n\rbrace_{n\in\mathbb{N}}$ of codes $\cC_n=(\cV_n,\kD_n,M_n)$ then achieves the second order rate $b=b(S,\eps|\rho)$, where $\eps = 1-\Phi(b/\sigma)$.
\end{proposition}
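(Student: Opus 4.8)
The plan is to show that the universal code $\cC_n = (\cV_n, \kD_n, M_n)$ satisfies both conditions in \Cref{def:source-coding}: the fidelity condition $\liminf_n \bF(\kE^n, \cC_n) \geq 1 - \eps$ and the rate condition $\limsup_n (\log M_n - nS)/\sqrt{n} \leq b$. The rate condition is immediate from \Cref{lem:size-of-code-space}: since $M_n = \dim \Upsilon^n_{S,b} \leq (n+1)^{d^2+d} 2^{Sn + b\sqrt{n}}$, we have $(\log M_n - nS)/\sqrt{n} \leq b + (d^2+d)\log(n+1)/\sqrt{n} \to b$. So the entire substance of the proof lies in the fidelity estimate.

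For the fidelity, first I would simplify $\bF(\kE^n, \cC_n)$ using the explicit form of the code. Since $\kD_n$ is the trivial embedding and $\cV_n(\ul{i}) = \Pi_n \psi_{\ul{i}} \Pi_n / \tr(\Pi_n \psi_{\ul{i}})$, the per-signal fidelity is $\tr(\cV_n(\ul{i})\psi_{\ul{i}}) = \tr(\Pi_n \psi_{\ul{i}} \Pi_n \psi_{\ul{i}})/\tr(\Pi_n \psi_{\ul{i}})$. Using $\psi_{\ul{i}} = |\psi_{\ul{i}}\rangle\langle\psi_{\ul{i}}|$ and the fact that $\Pi_n$ is a projector, the numerator equals $|\langle \psi_{\ul{i}} | \Pi_n | \psi_{\ul{i}}\rangle|^2 = \tr(\Pi_n \psi_{\ul{i}})^2$, so the ratio collapses to $\tr(\Pi_n \psi_{\ul{i}})$. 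Averaging over the ensemble then gives the clean identity
\begin{align*}
\bF(\kE^n, \cC_n) = \sum_{\ul{i}} p_{\ul{i}} \tr(\Pi_n \psi_{\ul{i}}) = \tr(\Pi_n \rho^{\ox n}),
\end{align*}
reducing the problem to showing $\liminf_n \tr(\Pi_n \rho^{\ox n}) \geq 1 - \eps$.

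The next step is to relate the code-space projector $\Pi_n$ to a spectral projector of $\rho^{\ox n}$ so that \Cref{thm:SOA} applies. The key is that $\Upsilon^n_{S,b}$ was built by unitarily sweeping the type-based subspaces $\Xi^n_{S,b}(B)$ over all bases $B$; choosing $B$ to be the eigenbasis of $\rho$ shows that $\Upsilon^n_{S,b} \supseteq \Xi^n_{S,b}(B_\rho)$, hence $\Pi_n \geq \Pi_n^{(\rho)}$, the projector onto $\Xi^n_{S,b}(B_\rho)$. In the eigenbasis of $\rho$, a sequence $\ul{i}$ of type $P$ lies in $T_n(S,b)$ exactly when $|T_P^n| \leq 2^{Sn + b\sqrt{n}}$; since the eigenvalue $p_{\ul{i}} = \prod_x p_x^{nP(x)}$ and $|T_P^n| \approx 2^{nH(P)}$, membership in $T_n(S,b)$ corresponds to a lower bound on $p_{\ul{i}}$, i.e.\ to the spectral region $\{\rho^{\ox n} \geq 2^{-Sn - b\sqrt{n}}\one\}$ up to polynomial type-counting corrections. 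Thus $\tr(\Pi_n \rho^{\ox n}) \geq \tr(\Pi_n^{(\rho)} \rho^{\ox n})$ is, up to an $\cO(\operatorname{poly}(n)/2^{c\sqrt{n}})$ error from the $(n+1)^d$ type factor, at least $\tr(\rho^{\ox n}\{\rho^{\ox n} > 2^{-Sn - b\sqrt{n}}\one\}) = 1 - \tr(\rho^{\ox n}\{\rho^{\ox n} \leq 2^{-Sn - b\sqrt{n}}\one\})$. Applying \Cref{thm:SOA} with $L = -b$ gives $\tr(\rho^{\ox n}\{\rho^{\ox n} \leq 2^{-Sn - b\sqrt{n}}\one\}) \to \Phi(-b/\sigma) = 1 - \Phi(b/\sigma) = \eps$, yielding the desired bound $\liminf_n \tr(\Pi_n \rho^{\ox n}) \geq 1 - \eps$.

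\textbf{The main obstacle} I anticipate is controlling the discrepancy between the \emph{type-based} membership condition $|T_P^n| \leq 2^{Sn + b\sqrt{n}}$ defining $T_n(S,b)$ and the \emph{eigenvalue} spectral condition $\rho^{\ox n} \geq 2^{-Sn-b\sqrt{n}}\one$ that \Cref{thm:SOA} speaks about. These do not coincide exactly: within a single type class the eigenvalue $p_{\ul{i}}$ is constant (so types and eigenvalue-thresholds are compatible in spirit), but the bound $|T_P^n| \leq 2^{nH(P)}$ carries a polynomial prefactor, so the type threshold and the eigenvalue threshold differ by a $\operatorname{poly}(n)$ factor — equivalently, by a $\cO(\log n)$ shift in the exponent, which is $o(\sqrt{n})$ and hence asymptotically negligible. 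The care lies in arranging this shift so that it is absorbed on the favorable side (i.e.\ showing $\Pi_n^{(\rho)}$ dominates a spectral projector with threshold $-Sn - (b - o(\sqrt{n}))\sqrt{n}$) so that the limit of \Cref{thm:SOA} is still $\Phi(-b/\sigma)$ by continuity of $\Phi$. Making this precise — confirming the eigenbasis inclusion $\Xi^n_{S,b}(B_\rho) \subseteq \Upsilon^n_{S,b}$ and tracking the type-counting corrections through to a clean application of \Cref{thm:SOA} — is where the real work of the proof resides.
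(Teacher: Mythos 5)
Your proposal is correct and takes essentially the same route as the paper's own proof: the rate bound from \Cref{lem:size-of-code-space}, the reduction of the ensemble average fidelity to $\tr(\Pi_n\rho^{\ox n})$, the lower bound on $\Pi_n$ by the spectral projector $\lbrace \rho^{\ox n} > 2^{-nS-\sqrt{n}b}\one\rbrace$ via the eigenbasis inclusion $\Xi^n_{S,b}(B_\rho)\subseteq\Upsilon^n_{S,b}$, and a final application of \Cref{thm:SOA}. The only remark worth making is that the ``main obstacle'' you anticipate dissolves on closer inspection: the inclusion $\lbrace\ul{i}\colon p_{\ul{i}} > 2^{-nS-\sqrt{n}b}\rbrace\subseteq T_n(S,b)$ (Hayashi's relation, which the paper invokes) is \emph{exact} rather than approximate, since all eigenvalues of $\rho^{\ox n}$ belonging to a fixed type $P$ coincide, so $|T_P^n|\, p_{\ul{i}} \leq \sum_{\ul{j}\in T_P^n} p_{\ul{j}} \leq 1$ gives $|T_P^n| \leq p_{\ul{i}}^{-1} < 2^{nS+\sqrt{n}b}$; hence the operator inequality $\Pi_n \geq \lbrace\rho^{\ox n} > 2^{-nS-\sqrt{n}b}\one\rbrace$ holds with no polynomial type-counting corrections, and the $o(\sqrt{n})$-shift and continuity-of-$\Phi$ argument you sketch, while sound, is unnecessary.
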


\begin{proof}
\Cref{lem:size-of-code-space} immediately yields
\begin{align*}
\limsup_{n\rightarrow\infty} \frac{\log M_n-Sn}{\sqrt{n}} \leq \limsup_{n\rightarrow\infty} \frac{(d^2+d)\log (n+1)}{\sqrt{n}} + b = b.
\end{align*}
With the visible encoding given by \eqref{eq:visible-encoder}, we can express the ensemble average fidelity $\bF(\kE^n,\cC_n)$ as \cite[Sect.~V.A.3]{DL14b}
\begin{align}\label{eq:avg-fidelity-expression}
\bF(\kE^n,\cC_n) &= \tr(\rho^n\Pi_n).
\end{align}
We now employ the following relation proved by Hayashi \cite{Hay08} in the context of classical fixed-length source coding:
\begin{align*}
\cS_n\coloneqq \left\lbrace \underline{x}\in\cX^n\colon -\log P^n(\underline{x}) < na + \sqrt{n}b \right\rbrace \subseteq T_n(a,b)
\end{align*}
which holds for arbitrary $a,b\in\mathbb{R}$ and probability distributions $P$ with support on $\lbrace 1, \dots, d\rbrace$. 
Consider the spectral decomposition $\rho = \sum_i r_i |\varphi_i\rangle\langle\varphi_i|$, and set $P_\rho = \lbrace r_i\rbrace_i$ and $B_\rho = \lbrace |\varphi_i\rangle\rbrace_i$. 
Observe that the projector $\lbrace \rho^n > 2^{-na-\sqrt{n}b}\one_n\rbrace$ projects onto eigenvectors of $\rho^n$ labelled by elements of $\cS_n$, upon choosing $P=P_\rho$. 
Since the code space $\Upsilon^n_{a,b}$ includes the subspace $\Xi^n_{a,b}(B_\rho)$, we have the operator inequality
\begin{align}\label{eq:operator-inequality}
\Pi_n \geq \left\lbrace \rho^n > 2^{-na-\sqrt{n}b}\one_n \right\rbrace.
\end{align}
We now set $a=S$ in \eqref{eq:operator-inequality} and substitute it in \eqref{eq:avg-fidelity-expression}. Taking the limit inferior, we obtain
\begin{align*}
\liminf_{n\rightarrow\infty}\bF(\kE^n,\cC_n) &= \liminf_{n\rightarrow\infty}\tr(\rho^n\Pi_n)\\
&\geq \liminf_{n\rightarrow\infty}\tr\left(\rho^n \left\lbrace \rho^n > 2^{-nS-\sqrt{n}b}\one_n \right\rbrace\right)\\
&= 1 - \limsup_{n\rightarrow\infty}\tr\left(\rho^n \left\lbrace \rho^n \leq 2^{-nS-\sqrt{n}b}\one_n\right\rbrace \right)\\
&= 1 - \Phi\left(-\frac{b}{\sigma}\right)\\
&= \Phi\left(\frac{b}{\sigma}\right),
\end{align*}
where we used \Cref{thm:SOA} in the third equality. Setting $\eps\coloneqq 1-\Phi(b/\sigma)$ now yields the claim.
\end{proof}

\subsection{General mixture}
In this section we prove the assertion of \Cref{thm:general-mixture}, which states that for $a>0$ and $\eps\in(0,1)$ the second order asymptotic rate $b(a,\eps|\rho)$ for $n$ uses of a general mixed source $(\rho_{\lambda},\dl)_{\lambda\in\Lambda}$ with source state $\rho^{(n)} = \int_{\Lambda} \rho_{\lambda}^{\ox n}\dl$ is given by the solution of the relation
\begin{align}\label{eq:b-solution}
\int_{\cL_=(a)}\Phi\left(\frac{b}{\sigma_{\lambda}}\right) \dl + \int_{\cL_<(a)}\dl = 1-\eps.
\end{align}
Here, the sets $\cL_=(a)$ and $\cL_<(a)$ are defined by
\begin{align*}
\cL_=(a) &\coloneqq \lbrace \lambda\in \Lambda\colon S(\rho_\lambda) = a\rbrace & \cL_<(a) &\coloneqq \lbrace \lambda\in \Lambda\colon S(\rho_\lambda) < a\rbrace,
\end{align*}
and we set $\sigma_\lambda\coloneqq \sigma(\rho_{\lambda})$ (cf.~\Cref{def:quantum-relative-entropy}(ii)).
Before we proceed with the proof, we note that the converse bound on the ensemble average fidelity in \Cref{lem:fidelity-converse-bound} holds for arbitrary ensembles $\lbrace d\mu(\lambda),\psi_\lambda\rbrace_{\lambda\in\Lambda}$ with respect to the measure $\mu$ on $\Lambda$. 
Here, $\psi_\lambda\in\cD(\cH)$ is a pure state for $\lambda\in\Lambda$, and $\rho = \int_{\Lambda} \psi_\lambda d\mu(\lambda)$ is the corresponding ensemble average state.

\subsubsection{Converse bound}\label{sec:general-converse}
Denoting the solution of \eqref{eq:b-solution} by $b^*$, we first prove the converse statement, i.e.~$b(a,\eps|\rho) \geq b^*$.
To this end, assume that $R<b^*$ is an $(a,\eps)$-achievable second order rate, that is, there is a sequence $\lbrace \cC_n\rbrace_{\nin} $ of codes $\cC_n = (\cV_n,\kD_n,M_n)$ for $n$ uses of the mixed source $(\rho_{\lambda},\dl)_{\lambda\in\Lambda}$ with source ensemble $\kEm^{(n)}$ such that
\begin{subequations}
\begin{align}
\liminf_{n\rightarrow\infty}\bar{F}\left(\kEm^{(n)},\cC_n\right) &\geq 1-\eps \label{eq:general-ach-condition1}\\
\limsup_{n\rightarrow\infty}\frac{\log M_n - na}{\sqrt{n}} &\leq R.\label{eq:general-ach-condition2}
\end{align}
\end{subequations}
Choose $\delta>0$ such that $R+2\delta < b^*$.
Then by \eqref{eq:general-ach-condition2} we have for sufficiently large $n$ that
\begin{align}\label{eq:general-M_n}
\log M_n < na + \sqrt{n}(R+\delta).
\end{align}
\Cref{lem:fidelity-converse-bound} yields the following bound on the fidelity $\bar{F}\left(\kEm^{(n)},\cC_n\right)$ for arbitrary $\gamma\in\mathbb{R}$:
\begin{align*}
\bar{F}\left(\kEm^{(n)},\cC_n\right) \leq 1 - \int_{\Lambda}\tr\left( \rho_{\lambda}^n\lbrace \rho_{\lambda}^n\leq 2^{-\gamma}\one_n\rbrace\right) \dl + 2^{-\gamma+\log M_n}.
\end{align*}
We now set $\gamma = \log M_n + \sqrt{n}\delta$, such that by \eqref{eq:general-M_n} we have
\begin{align*}
\gamma < na + \sqrt{n}(R+2\delta).
\end{align*} 
Hence, \Cref{lem:brace-projectors} yields
\begin{align*}
\bar{F}\left(\kEm^{(n)},\cC_n\right) &\leq 1 - \int_{\Lambda}\tr\left( \rho_{\lambda}^n \left\lbrace \rho_{\lambda}^n\leq 2^{-na-\sqrt{n}(R+2\delta)}\one_n \right\rbrace \right) \dl + 2^{-\sqrt{n}\delta}\\
&= 1 + 2^{-\sqrt{n}\delta} - \int_{\cL_=(a)}\tr\left( \rho_{\lambda}^n \left\lbrace \rho_{\lambda}^n\leq 2^{-na-\sqrt{n}(R+2\delta)}\one_n \right\rbrace \right) \dl\\
& \qquad\qquad {} - \int_{\cL_<(a)}\tr\left( \rho_{\lambda}^n \left\lbrace \rho_{\lambda}^n\leq 2^{-na-\sqrt{n}(R+2\delta)}\one_n \right\rbrace \right) \dl\\
&\qquad \qquad {} - \int_{\cL_>(a)}\tr\left( \rho_{\lambda}^n \left\lbrace \rho_{\lambda}^n\leq 2^{-na-\sqrt{n}(R+2\delta)}\one_n \right\rbrace \right) \dl\numberthis\label{eq:fidelity-bound}
\end{align*}
where we defined $\cL_>(a) \coloneqq \lbrace\lambda\in\Lambda\colon S(\rho_\lambda) > a\rbrace$.
By \Cref{thm:SOA} and \Cref{lem:asym-norm-dominance} we have the following:
\begin{align}\label{eq:liminf-cases}
\lim_{n\to\infty} \tr\left( \rho_{\lambda}^n \left\lbrace \rho_{\lambda}^n\leq 2^{-na-\sqrt{n}(R+2\delta)}\one_n \right\rbrace \right) = \begin{cases}
\Phi\left( \frac{-(R+2\delta)}{\sigma_\lambda} \right) & \text{if } S(\rho_\lambda)=a\\
1 & \text{if } S(\rho_{\lambda}) > a\\
0 & \text{if } S(\rho_{\lambda}) < a
\end{cases} 
\end{align}
Taking the limit inferior on both sides of \eqref{eq:fidelity-bound}, noting that we can exchange limit and integral by the Dominated Convergence Theorem, and using \eqref{eq:liminf-cases}, we obtain
\begin{align*}
\liminf_{n\to\infty}\bar{F}\left(\kEm^{(n)},\cC_n\right) &\leq 1 - \int_{\cL_=(a)} \Phi\left(\frac{-(R+2\delta)}{\sigma_\lambda}\right) \dl - \int_{\cL_>(a)} \dl \\
&= 1 - \int_{\cL_=(a)}\dl + \int_{\cL_=(a)}\Phi\left(\frac{R+2\delta}{\sigma_\lambda}\right) \dl - \int_{\cL_>(a)} \dl\\
&= \int_{\cL_<(a)}\dl + \int_{\cL_=(a)}\Phi\left(\frac{R+2\delta}{\sigma_\lambda}\right) \dl\\
&< \int_{\cL_<(a)}\dl + \int_{\cL_=(a)}\Phi\left(\frac{b^*}{\sigma_\lambda}\right) \dl\\
&= 1-\eps.
\end{align*}
Here, we used the relation $\Phi(-x)=1-\Phi(x)$ in the first equality, the fact that $\mu$ is a normalized measure on $\Lambda = \cL_=(a)\cup\cL_<(a)\cup \cL_>(a)$ in the second equality, and the assumption $R+2\delta < b^*$ in the strict inequality.
This is a contradiction to \eqref{eq:general-ach-condition1}, and hence, we have $b(a,\eps|\rho)\geq b^*$.

\subsubsection{Achievability bound}
We now use the universal source code $\lbrace\cC_n\rbrace_{n\in\mathbb{N}}$ with $\cC_n\coloneqq \lbrace\cV_n,\kD_n,M_n\rbrace$ as defined in \Cref{prop:universal-code} to prove that the second order rate $b^*$ is achievable.
To this end, consider $n$ uses of a mixed source $(\rho_\lambda,d\mu(\lambda))_{\lambda\in\Lambda}$ with source state $\rho^{(n)}$ as defined in \eqref{eq:general-mixture} and ensemble $\kEm^{(n)}$ as defined in \eqref{eq:general-mixture-ensemble}.
Recall that $\Pi_n$ denotes the projector onto the code space $\Upsilon^n_{a,b}$ defined in \Cref{sec:universal}. 
For arbitrary $a>0$, the calculation from \cite[Sect.~V.A.3]{DL14b} shows that we can express the ensemble average fidelity $\bF(\kEm^{(n)},\cC_n)$ as
\begin{align*}
\bF\left(\kEm^{(n)},\cC_n\right) &= \tr\left(\Pi_n\rho^{(n)}\right)\\
&= \int_{\Lambda} \tr\left(\Pi_n \rho_\lambda^n\right) d\mu(\lambda)\\
&\geq \int_{\Lambda} \tr\left(\rho_\lambda^n \left\lbrace \rho_\lambda^n > 2^{-na-\sqrt{n}b}\one_n \right\rbrace \right)d\mu(\lambda)\\
&= 1 - \int_{\Lambda} \tr\left(\rho_\lambda^n \left\lbrace \rho_\lambda^n \leq 2^{-na-\sqrt{n}b}\one_n \right\rbrace \right) d\mu(\lambda),\numberthis\label{eq:ach-starting-point}
\end{align*}
where the inequality follows from \eqref{eq:operator-inequality}.
We set $b=b^*$, where $b^*$ is once again defined as the solution of the relation
\begin{align*}
\int_{\cL_=(a)}\Phi\left(\frac{b}{\sigma_{\lambda}}\right) \dl + \int_{\cL_<(a)}\dl = 1-\eps.
\end{align*}
Similar to \Cref{sec:general-converse}, we then compute
\begin{align*}
&\liminf_{n\to\infty} \int_{\Lambda} \tr\left(\rho_\lambda^n \left\lbrace \rho_\lambda^n \leq 2^{-na-\sqrt{n}b^*}\one_n\right\rbrace \right) d\mu(\lambda)\\
&\qquad\qquad {} = \liminf_{n\to\infty} \int_{\cL_=(a)} \tr\left(\rho_\lambda^n \left\lbrace \rho_\lambda^n \leq 2^{-na-\sqrt{n}b^*}\one_n\right\rbrace \right) d\mu(\lambda)\\
&\qquad\qquad\qquad {} + \liminf_{n\to\infty} \int_{\cL_<(a)} \tr\left(\rho_\lambda^n \left\lbrace \rho_\lambda^n \leq 2^{-na-\sqrt{n}b^*}\one_n\right\rbrace \right) d\mu(\lambda)\\
&\qquad\qquad\qquad {} + \liminf_{n\to\infty} \int_{\cL_>(a)} \tr\left(\rho_\lambda^n \left\lbrace \rho_\lambda^n \leq 2^{-na-\sqrt{n}b^*}\one_n\right\rbrace \right) d\mu(\lambda)\\
&\qquad\qquad {} = \int_{\cL_=(a)} \Phi\left(\frac{b^*}{\sigma_\lambda}\right) d\mu(\lambda) + \int_{\cL_<(a)}d\mu(\lambda)\\
&\qquad\qquad {} = 1-\eps,
\end{align*}
where the exchange of the limit inferior and the integral is permitted by the Dominated Convergence Theorem, and we once again used \eqref{eq:liminf-cases}.
Hence, we obtain $\liminf_{n\to\infty}\bF(\kEm^{(n)},\cC_n) \geq 1-\eps$ by \eqref{eq:ach-starting-point}.
Moreover, \Cref{lem:size-of-code-space} yields that the universal source code $\{\cC_n\}_{n\in\mathbb{N}}$ satisfies
\begin{align*}
\limsup_{n\to\infty}\frac{\log M_n-na}{\sqrt{n}} \leq b^*.
\end{align*}
Hence, the rate $b^*$ is achievable, and we obtain $b(a,\eps|\rho)\leq b^*$. 
Together with $b(a,\eps|\rho)\geq b^*$ from the preceding section, this proves \Cref{thm:general-mixture}.

\subsection{Mixed source consisting of two memoryless sources}
In this section, we prove the second order asymptotic rates for $n$ uses of a mixed source $(\rho_1,\rho_2,t)$ consisting of two memoryless sources $\rho_1$ and $\rho_2$, as stated in \Cref{thm:refined-analysis}.
We set $S_i=S(\rho_i)$ and $\sigma_i= \sigma(\rho_i)$ for $i=1,2$.
By \Cref{thm:identical-entropies}, we have the relation
\begin{align}
\sum_{i\colon S_i=a} t_i \Phi\left(\frac{L}{\sigma_i}\right) + \sum_{i\colon S_i<a} t_i = 1-\eps. \label{eq:finite-k-relation}
\end{align}
In the first case of \Cref{thm:refined-analysis}, where $S_1 = S_2 = S$, we set $a=S$ in \eqref{eq:finite-k-relation}, which immediately yields 
\begin{align*}
t \Phi\left(\frac{L}{\sigma_1}\right) + (1-t) \Phi\left(\frac{L}{\sigma_2}\right) = 1-\eps,
\end{align*}
and thus proves \Cref{thm:refined-analysis}(i).

Consider now the second case of \Cref{thm:refined-analysis}, where $S_1>S_2$ and $t>\eps$. 
Choosing $a=S_1$, we obtain from \eqref{eq:finite-k-relation} that
\begin{align*}
t \Phi\left(\frac{b^*}{\sigma_1}\right) + 1-t &= 1-\eps,
\end{align*}
which implies that
\begin{align*}
b^* &= \sigma_1 \invP{1-\frac{\eps}{t}} = -\sigma_1\invP{\frac{\eps}{t}}.
\end{align*}
This is the assertion of \Cref{thm:refined-analysis}(ii).

Finally, we consider the third case of \Cref{thm:refined-analysis}, where $S_1>S_2$ and $t<\eps$.
Choosing $a=S_2$ in \eqref{eq:finite-k-relation} yields
\begin{align*}
b^* = \sigma_2 \invP{\frac{1-\eps}{1-t}} = \sigma_2 \invP{1-\frac{\eps-t}{1-t}} = -\sigma_2  \invP{\frac{\eps-t}{1-t}},
\end{align*}
and this proves \Cref{thm:refined-analysis}(iii).

\section{Conclusions and open questions}\label{sec:conclusion}
We derived the second order asymptotic rates of fixed-length visible quantum source coding using a mixed source consisting of memoryless sources. To our knowledge, this is the first example of a second order asymptotic analysis of the optimal rate for a {\em{quantum}} information-processing task which uses a resource with memory. Previously, such analyses in the quantum setting were restricted to memoryless (or i.i.d.) resources \cite{TH13,Li14,KH13,KH13a,DL14b,BG13}.

An interesting problem is to extend our methods to mixed classical-quantum channels. In the classical case this has been studied by Polyanskiy \textit{et al.}~\cite{PPV11} (see also \cite{TT14}). The main result about the second order expansion of the capacity of a mixed channel (\cite[Thm.~7]{PPV11}) bears a close resemblance to the equivalent result about source coding using a mixed source as in \cite{NH13}.

\section*{Acknowledgements}
\addcontentsline{toc}{section}{Acknowledgements}
We would like to thank Vincent Tan for useful discussions, and the anonymous referees for helpful feedback.

\printbibliography[title={References},heading=bibintoc]
\end{document}